\newtheorem{theorem}{Theorem}
\newtheorem{lemma}{Lemma}
\newcommand{\ch}{\textup{CH}}
\newcommand{\etal}{~et~al.}
\newcommand{\np}{$\mathcal{NP}$-hard}
\newcommand{\p}[2]{\{#1, \ldots, #2\}}
\newcommand{\pij}{\p{i}{j}}
\DeclareMathOperator{\mA}{m_1}
\DeclareMathOperator{\mB}{m_2}
\newcommand{\MinMin}{\textsc{MinMin}}
\newcommand{\MinMax}{\textsc{MinMax}}
\newcommand{\MaxMin}{\textsc{MaxMin}}
\newcommand{\MaxMax}{\textsc{MaxMax}}
\newcommand{\MinMinM}{\textsc{MinMin1}}
\newcommand{\MinMaxM}{\textsc{MinMax1}}
\newcommand{\MaxMinM}{\textsc{MaxMin1}}
\newcommand{\MaxMaxM}{\textsc{MaxMax1}}
\newcommand{\MinMinB}{\textsc{MinMin2}}
\newcommand{\MinMaxB}{\textsc{MinMax2}}
\newcommand{\MaxMinB}{\textsc{MaxMin2}}
\newcommand{\MaxMaxB}{\textsc{MaxMax2}}
\title{New variants of Perfect Non-crossing Matchings%
  \thanks{A preliminary extended abstract was presented at the 36th European Workshop on Computational Geometry (EuroCG) 2020 and a shorter version of this work was presented at the 7th International Conference on Algorithms and Discrete Applied Mathematics (CALDAM) 2021.
    I.M. was partially supported by the Swiss National Science Foundation, project SNF 200021E-154387.
    M.S. was partly supported by the Ministry of Education, Science and Technological Development of the Republic of Serbia (Grant No.\ 451-03-68/2020-14/200125), and the Provincial Secretariat for Higher Education and Scientific Research, Province of Vojvodina.
    H.S. was partially supported by the German Research Foundation, DFG grant FE-340/11-1.
  }
}
\author[1]{Ioannis Mantas}
\author[2]{Marko Savić}
\author[3]{Hendrik Schrezenmaier}
\affil[1]{Faculty of Informatics,
  Università della Svizzera italiana,
  Lugano, Switzerland,
  \normalfont{\texttt{ioannis.mantas@usi.ch}}
}
\affil[2]{Department of Mathematics and Informatics,
  Faculty of Sciences,
  University of Novi Sad, Serbia,
  \normalfont{\texttt{marko.savic@dmi.uns.ac.rs}}
}
\affil[3]{Institut f\"{u}r Mathematik,
  Technische Universit\"{a}t Berlin, Germany,
  \normalfont{\texttt{schrezen@math.tu-berlin.de}}
}
\date{\vspace{-5ex}}
\begin{document}

\maketitle

\begin{abstract}
    Given a set of points in the plane, we are interested in matching them with straight line segments. 
    We focus on perfect (all points are matched) non-crossing (no two edges intersect) matchings.
    Apart from the well known \textsc{MinMax} variation, where the length of the longest edge is minimized, we extend work by looking into different optimization variants such as \textsc{MaxMin}, \textsc{MinMin}, and \textsc{MaxMax}.  
    We consider both the monochromatic and bichromatic versions of these problems and by employing diverse techniques we provide efficient algorithms for various input point configurations.
\end{abstract}


\section{Introduction}

In the \emph{matching problem}, we are given a set of objects and the goal is to divide
the set into pairs such that no object belongs in two pairs.
This simple problem is a classic in graph theory, which has received a lot of attention, both in an abstract and in a geometric setting.
There are plenty of variants of the problem and there is a great plethora of results.

In this paper, we consider the geometric setting where, given a set $P$ of $2n$ points in the plane, the goal is to match points of $P$ with straight line segments, in the sense that each pair of points induces an \emph{edge} of the matching.  
A matching is \emph{perfect} if it consists of exactly $n$ pairs.
A matching is \emph{non-crossing} if all edges induced by the matching are pairwise disjoint.
When there are no restrictions on which pairs of points can be matched, the problem is called \emph{monochromatic}. 
In the \emph{bichromatic} variant, $P$ is partitioned into sets $B$ and $R$ of blue and red points, respectively, and only points of different colors are allowed to be matched.
When $|B|=|R|=n$, the point set $P$ is called \emph{balanced}.

\begin{figure}[t]
	\centering
	\begin{minipage}{.22\textwidth}
	  \centering
	  \includegraphics[width=0.98\textwidth,page=1]{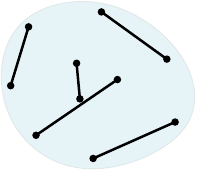}
	\end{minipage}
	\hfill
	\begin{minipage}{.22\textwidth}
	  \centering
	  \includegraphics[width=0.98\textwidth,page=2]{objectiveExamples}
	\end{minipage}
	\hfill
	\begin{minipage}{.22\textwidth}
	  \centering
	  \includegraphics[width=0.98\textwidth,page=4]{objectiveExamples}
	\end{minipage}
	\hfill
	\begin{minipage}{.22\textwidth}
	  \centering
	  \includegraphics[width=0.98\textwidth,page=3]{objectiveExamples}
	\end{minipage}
	\caption{Optimal \MinMinM{}, \MaxMaxM{}, \MinMaxM{}, and \MaxMinM{} matchings of monochromatic points.
	The edges realizing the values of the matchings are highlighted.}	
	\label{fig:objectiveExamples}
\end{figure}

\subsection{Related work on perfect non-crossing matchings}

Geometric matchings find applications in many diverse fields, with the most famous perhaps being operations research, where it is known as the \emph{assignment problem}.
They are useful in the field of shape matching, when shapes are represented by finite point sets, see e.g.,~\cite{veltkamp2001}, and it is a fundamental problem in pattern recognition.
Among others, geometric matchings appear in VLSI design problems, see e.g., \cite{cong1993}, in computational biology, see e.g., \cite{colannino2006},
and are used for map construction or comparison algorithms, see e.g., \cite{eppstein2015}.

In most applications, requiring the matching to be non-crossing or perfect is rather natural.
Given a point set, monochromatic or balanced bichromatic in general position, a perfect non-crossing matching  always exists and it can be found in $O(n\log n)$ time by recursively computing \emph{ham-sandwich cuts}~\cite{lo1994} or by using the algorithm of Hershberger and Suri~\cite{hershberger1992}.
Many times though, it is not sufficient to have any perfect non-crossing matching and the interest lies in finding such a matching with respect to some optimization criterion.

A well-studied optimization criterion is minimizing the sum of lengths of all edges, which we refer to as the \textsc{MinSum} variant. 
This is also known as the \emph{Euclidean assignment} problem or \emph{Euclidean matching}, among others, in the literature.
It is interesting, and not hard to to see, that such a matching is always non-crossing.
For monochromatic point sets, an $O(n^{1.5}\log n)$-time algorithm was given by Varadarajan~\cite{varadarajan1998}.
For bichromatic point sets, Kaplan\etal~\cite{kaplan2020} recently presented an $O(n^2\log^9n\lambda_6(\log n))$-time algorithm,
outperforming previous results~\cite{agarwal2000,vaidya1989}.
When points are in convex position, Marcotte and Suri~\cite{marcotte1991} solved the problem in
$O(n\log n)$ time for both the monochromatic and bichromatic settings.

Another popular goal is to minimize the length of the longest edge, which we refer to as the \textsc{MinMax} variant.
This is also known as the \emph{bottleneck matching} in the literature.
Given a monochromatic point set, it was shown that finding such a matching is \np{} by Abu-Affash\etal~\cite{abu2014}.  This was accompanied by an $O(n^3)$-time algorithm when the points are in convex position.
Recently this was improved to $O(n^2)$ time by Savi{\'c} and Stojakovi{\'c}~\cite{savic2017}.
For bichromatic point sets, Carlsson\etal~\cite{carlsson2015} showed that finding a \textsc{MinMax} matching is also \np{}. 
Biniaz\etal~\cite{biniaz2014} gave an $O(n^3)$-time algorithm for points in convex position and an $O(n\log n)$-algorithm for points on a circle.
These were recently improved to $O(n^2)$ and $O(n)$, respectively, by Savi{\'c} and Stojakovi{\'c}~\cite{savic2018}.

Several other interesting optimization goals have been studied.
For example, in the \emph{fair matching} problem, also known as the \emph{uniform matching}, the goal is to minimize the length difference between the longest and the shortest edge, and in the minimum deviation matching, the difference between the length of the shortest edge and the average edge length should be minimized.
Both were solved in polynomial time by Efrat\etal~ \cite{efrat2001,efrat1996}. 
Alon\etal~\cite{alon1993} studied the \textsc{MaxSum} variant, where the goal is to maximize the sum of edge lengths. They conjectured that the problem is \np{}, and gave an approximation algorithm.

\subsection{Problem variants considered and our contribution.} 
In this work, we continue exploring similar interesting optimization variants in different settings and give efficient algorithms for constructing optimal matchings.
We only deal with perfect non-crossing matchings, so these properties will always be assumed from now on, without further mention.
We consider four optimization variants: $\MinMin$ where the length of the shortest edge is minimized, $\MaxMax$ where the length of the longest edge is maximized, $\MaxMin$ where the length of the shortest edge is maximized, and $\MinMax$.
See \cref{fig:objectiveExamples} for an example of a point set and the four different optimal matchings.

To the best of our knowledge, except for \MinMax{}, the other three variants have not been considered before. 
Studying the \MinMin{} and \MaxMax{} variants is motivated by the analysis of worst-case scenarios for problems where very short or long edges are undesirable, but the selection of edges is not something that we can control.
More generally, the values of \MinMin{} and \MaxMax{} serve as lower and upper bounds on the length of any feasible edge and can be helpful in estimating the quality of a matching, with respect to some objective function.
The \MaxMin{} variant, similar to \MinMax{},
resembles fair matchings in the sense that all edges have similar lengths, analogously to the variants studied in~\cite{efrat2001}.

\begin{table}
    \caption{Summary of results on the optimization of perfect non-crossing matchings.
	The value of the matching can be obtained in the time not indicated with (*).
    The time marked with (*) represents the extra time needed to also return a matching.
	$h$~denotes the size of the convex hull. 
	$\varepsilon$ denotes an arbitrarily small positive constant.
	Results without reference are given in this paper.
	} 
    \label{tab:allResults}

    \begin{adjustbox}{width=\textwidth}

	\setlength{\tabcolsep}{3pt}
	\centering
	\begin{tabular}{@{} lllll @{}}
		\toprule
		\bfseries Monochromatic & \bfseries \MinMinM & \bfseries \MaxMaxM & \bfseries \MinMaxM & \bfseries \MaxMinM\\
		\toprule
		\midrule
		General Position & $O(nh+n\log n)$, & $O(nh) + O(n\log n)^\ast$, & \np~\cite{abu2014}& ?\\
		& $O(n^{1+\epsilon} + n^{2/3}h^{4/3}\log^{3}n)$ & $O(n^{1+\epsilon} + n^{2/3}h^{4/3}\log^{3}n)$ & & \\
		Convex Position & $O(n)$ & $O(n)$ & $O(n^2)$~\cite{savic2017} & $O(n^3)$ \\
		Points on circle & $O(n)$ & $O(n)$ & $O(n)$ & $O(n)$ \\
		\toprule
		\bfseries Bichromatic & \bfseries \MinMinB & \bfseries \MaxMaxB & \bfseries \MinMaxB & \bfseries \MaxMinB\\
		\toprule
		\midrule
		General Position & ? & ? & \np~\cite{carlsson2015}& ?\\
		Convex Position & $O(n)$ & $O(n)$ & $O(n^2)$~\cite{savic2018} & $O(n^3)$ \\
		Points on circle & $O(n)$ & $O(n)$ & $O(n)$~\cite{savic2018} & $O(n^3)$ \\
		Doubly collinear & $O(n)$ & $ O(1) +O(n)^\ast$ & $O(n^4 \log n)$ & ? \\
		\bottomrule
	\end{tabular}
	\end{adjustbox}
\end{table}

We study both the monochromatic and bichromatic versions of these variants.
In the bichromatic version, we assume that $P$ is balanced.
We denote the monochromatic problems with the index~$1$, e.g.~\MinMinM, and the bichromatic problems
with the index~$2$, e.g.~\MinMinB.

These problems are examined in different point configurations.
In \cref{sec:general}, we consider monochromatic points in general position.
In \cref{sec:convex}, points are in convex position.
In \cref{sec:chords}, points lie on a circle.
In \cref{sec:collinear}, we consider \emph{doubly collinear} bichromatic point sets, where the blue points lie on one line and the red points on another line.

\cref{tab:allResults} summarizes the best-known running times for different matching variants including the contributions of this paper.
For each variant we study their structural properties and combine diverse techniques with existing results in order to tackle as many 
configurations as possible.
The various open questions that arise throughout the paper, pave the way for further research in this family of problems.

\section{Monochromatic points in general position}\label{sec:general}

In this section, $P$ is a monochromatic set of points in general position, where we assume that
no three points are collinear.
We denote by $\ch(P)$ the boundary of the convex hull of $P$, by $h$ the number of vertices of $\ch(P)$, by $q_1,\dotsc,q_h$ the counterclockwise ordering of the vertices along $\ch(P)$, and by $d(v,w)$
the Euclidean distance between two points $v$ and $w$.
We call an edge $(v,w)$ \emph{feasible}, if there exists a matching which contains $(v,w)$, and \emph{infeasible} otherwise.

The following lemma gives us a feasibility criterion for an edge $(v,w)$.    

\begin{lemma}
\label{lem:monoGeneralCondition}
    An edge $(v,w)$ is infeasible if and only if (1) $v,w \in \ch(P)$ and (2)~there is an odd number of points on each side of $(v,w)$. 
\end{lemma}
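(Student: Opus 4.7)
The plan is to prove the two directions of the equivalence separately. The $(\Leftarrow)$ direction is direct convexity; the $(\Rightarrow)$ direction I would handle by contrapositive, splitting into cases, and I expect the main obstacle to be one specific sub-case.

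For $(\Leftarrow)$, suppose $v,w\in\ch(P)$ and both sides of the line $\ell$ through $v,w$ have an odd number of points. Since $v,w$ are on the hull, the segment $vw$ is a chord that splits $\ch(P)$ into two convex regions $R_1,R_2$. For any edge $(a,b)$ of a matching containing $(v,w)$, with $a,b\in P\setminus\{v,w\}$ on opposite sides of $\ell$, the segment $ab$ lies in the convex hull $\ch(P)$, so its intersection with $\ell$ lies in $\ell\cap\ch(P)=[v,w]$; by general position this intersection is in the relative interior of both segments, so $ab$ crosses $vw$. Hence every other edge of the matching has both endpoints on the same side of $\ell$, but each side has an odd number of points, so perfect matching of each side is impossible. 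Infeasibility follows.

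For $(\Rightarrow)$, I argue by contrapositive: assume not [(1) and (2)] and construct a matching containing $(v,w)$. If (1) holds and (2) fails, parity forces both sides of $\ell$ to contain an even number of points. I then match each side separately using a non-crossing perfect matching (which exists for any even point set, as stated in the introduction); combined with $(v,w)$ the result is non-crossing by the same convex-region argument used above. If (1) fails, assume WLOG $w$ lies in the interior of $\ch(P)$. If the two sides of $\ell$ have even count, the exact same construction works. The remaining case---(1) fails and both sides are odd---is the main obstacle.

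To handle that sub-case, I would use that $w$ being interior forces $\ell$ to extend strictly past $w$ inside $\ch(P)$, exiting the hull at a boundary point $v'$. By general position $v'$ cannot be a point of $P$ (otherwise $v,w,v'$ would be three collinear points of $P$), so $v'$ lies in the relative interior of some hull edge $q_kq_{k+1}$ with $q_k$ and $q_{k+1}$ on opposite sides of $\ell$. I would then include $(q_k,q_{k+1})$ in the matching: it is a hull edge, so no segment between two points of $P$ properly crosses it, and in particular it does not cross $vw$, because the intersection point $v'$ lies strictly beyond $w$ on $\ell$ and therefore outside the segment $[v,w]$. After removing $q_k$ and $q_{k+1}$ one point is subtracted from each side of $\ell$, so both sides now have even count, and the earlier construction (match each side, then add $(v,w)$) completes the matching. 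The key geometric point that drives the argument---and where a careless proof would fail---is recognizing that interiority of $w$ gives the chord $\ell\cap\ch(P)$ strictly more extent than the segment $vw$, opening up room past $w$ for a parity-fixing hull edge.
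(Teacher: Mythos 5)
Your proof is correct and follows essentially the same route as the paper: the forward direction via the convexity/parity argument, and the converse by contrapositive, where the parity obstruction in the "non-hull endpoint" sub-case is repaired by adding the hull edge that the line through $v,w$ crosses beyond $w$ (the paper's edge $(x,y)$ is exactly your $(q_k,q_{k+1})$). The only cosmetic difference is that you split the case analysis by which condition fails and by parity separately, while the paper folds the "both sides even" situation into a single case regardless of whether $v,w$ lie on the hull.
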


\begin{proof}
    Let $l$ denote the line through the points 
    $v$ and $w$ and let $A,B$ be the subdivision of $P \setminus \{v,w\}$ induced by $l$.
    For the if-part, let $v,w \in   \ch(P)$. Then each edge $(a,b)$ with $a \in A$, $b \in B$ intersects $(v,w)$ and thus cannot be in a matching
    with $(v,w)$. If further $A,B$ have an odd number of points each, at least one point from each set will not be matched if $(v,w)$ is in the matching. Hence, $(v,w)$ is infeasible.
    
    For the only-if-part, let $(v,w)$ be an infeasible edge and suppose that one of (1) and (2) is not fulfilled.
    If (2) is not fulfilled, then $A,B$ have an even number of points each. 
    Therefore, we can find matchings of $A$ and $B$ independently without intersecting $(v,w)$. Therefore, $(v,w)$ is a feasible edge, a contradiction.
    If (1) is not fulfilled, then not both of $v,w$ are in $\ch(P)$.
    So, $l$ crosses at least one edge $(x,y)$ of $  \ch(P)$, with $x \in A, y\in B$, see \cref{fig:general}a.
    But then, both $A \setminus \{ x \}$ and $B \setminus \{ y \}$ contain an even number of points. 
    Thus, there exist matchings of $A \setminus \{ x \}$ and $B \setminus \{ y \}$, which together with $(v,w)$ and $(x,y)$ form a matching of $P$, a contradiction.
\end{proof}

\subsection{\MinMinM{} and \MaxMaxM{} matchings in general position}

The problems $\MinMin$ and $\MaxMax$ are equivalent to finding the \emph{extremal},  shortest or longest, feasible pair.
A main challenge is to check the feasibility of an edge according to \cref{lem:monoGeneralCondition}.
We propose two different approaches.

\paragraph{Using radial orderings.}

The \emph{radial ordering} of a point $p\in P$ is the counterclockwise circular ordering of the points in $P\setminus p$ by angle around $p$.
It is well known that the 
radial orderings of all $p\in P$ can be computed 
in $O(n^2)$ total time using the dual line arrangement of $P$, see e.g.~\cite{handbook-arrangements,handbook-visibility}.

Given a subset $A \subseteq P$, we define the \emph{$A$-weak radial ordering} of a point $p\in P$ as the radial ordering of $p$ where the points from $A$ that occur between two points from $\overline{A}:=P\setminus A$ are given as an unordered set, see Figures \ref{fig:general}b and \ref{fig:general}c.
We are interested in the $\overline{ \ch(P)}$-weak radial orderings of the points in $ \ch(P)$.
These are of interest, as they allow us to check the feasibility of all pairs $(q_i,q_j)$ of points $q_i,q_j \in \ch(P)$ in $O(nh)$ total time using \cref{lem:monoGeneralCondition}.

\begin{figure}[t]
  \centering
  \begin{minipage}{0.32\textwidth}
    \centering
    \includegraphics[width=0.98\textwidth,page=1]{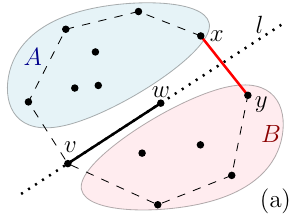}
  \end{minipage}
  \hfill
  \begin{minipage}{0.32\textwidth}
    \centering
    \includegraphics[width=0.98\textwidth,page=2]{inner_extendable}
  \end{minipage}
  \hfill
  \begin{minipage}{0.32\textwidth}
    \centering
    \includegraphics[width=0.98\textwidth, page=3]{inner_extendable}
  \end{minipage}
  \caption{(a) Proof of \cref{lem:monoGeneralCondition}. (b) Proof of \cref{lem:generalMaxMaxSol}. (c) A weak radial ordering.}
  \label{fig:general}
\end{figure}

\begin{lemma}\label{lem:weakRadialOrderings}
    Given a set of points $P$ and a subset $A \subseteq P$ with $|P|=n$ and $|A|=k$,
    the $\overline{A}$-weak radial orderings of all points in $A$ can be computed
    in $O(nk)$ time.
\end{lemma}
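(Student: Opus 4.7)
The plan is to extend the classical dual-arrangement technique for computing all radial orderings, exploiting that precise angular information is only needed for the $k$ points of $A$.

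First, I build the arrangement $\mathcal{A}$ of the $k$ lines dual to the points of $A$ in $O(k^2)$ time by standard methods. Traversing each line $p^*$ (for $p \in A$) through $\mathcal{A}$ enumerates its intersections with the other $k-1$ dual lines in $x$-order; by point--line duality this is exactly the radial ordering of $A \setminus \{p\}$ around $p$, obtained in $O(k)$ time per point and $O(k^2)$ in total.

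Second, for each $q \in \overline{A}$ I walk the dual line $q^*$ through $\mathcal{A}$ from $x = -\infty$ to $x = +\infty$. A starting face is found by pre-sorting the lines of $A^*$ by slope once (in $O(k \log k)$ time, an additive term dominated by $O(nk)$) and binary searching the slope of $q^*$ in that sorted order. By the zone theorem, the faces of $\mathcal{A}$ met by $q^*$ have total boundary complexity $O(k)$, so the walk terminates in $O(k)$ time using a standard DCEL representation of $\mathcal{A}$. The key observation is that whenever $q^*$ crosses an edge of some line $a^* \in A^*$, this edge is bounded by two consecutive vertices $a^* \cap b_1^*$ and $a^* \cap b_2^*$ of $\mathcal{A}$ on $a^*$, corresponding to points $b_1, b_2 \in A$ that are angularly adjacent around $a$; hence the single walk records, for every $a \in A$, the exact slot of $q$ in the radial ordering of $A$ around $a$. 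Summed over the $n-k$ points of $\overline{A}$, this phase runs in $O(nk)$ time.

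Merging the sorted $A$-orderings from the first phase with the slot assignments from the second phase produces the $\overline{A}$-weak radial ordering of each point of $A$, matching the output size $O(nk)$. The total running time is $O(k^2 + nk) = O(nk)$. The main technical obstacle is the bookkeeping during the walks: at each edge crossing we must identify the two endpoints of the crossed edge and attribute them to the correct point $a \in A$. This is supported in $O(1)$ per step by the DCEL, which is crucial for the amortized accounting that turns the $O(k)$ zone-complexity bound into an $O(k)$ per-walk runtime.
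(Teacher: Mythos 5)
Your proposal is correct and follows essentially the same route as the paper: build the dual line arrangement of $A$ in $O(k^2)$ time, use the zone theorem to insert each point of $\overline{A}$ in $O(k)$ time by walking its dual line through the arrangement, and read off each weak radial ordering by interleaving the intersection order along each line of $A$ with the recorded crossings. The only cosmetic difference is bookkeeping (you record a slot per pair $(q,a)$ where the paper accumulates a set $X_e$ per arrangement edge), which does not change the argument or the running time.
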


\begin{proof}
    First, we compute the dual line arrangement $\mathcal{L}_A$ of $A$ in $O(k^2)$ time~\cite{handbook-arrangements,handbook-visibility}.
    We denote the dual line of a point $p$ by $l_p$.
    For each edge $e$ of $\mathcal{L}_A$, we initialize a set $X_e := \emptyset$, also in $O(k^2)$ total time.
    Then, for each point $p\in P\setminus A$, we find the set $E_p$ of edges of $\mathcal{L}_A$ that are intersected by $l_p$ and add $p$ to all sets $X_e$ with $e\in E_p$. Due to the \emph{zone theorem}~\cite{handbook-arrangements} this takes $O(k)$ time for each $p$.
    
    Finally, we can read off the weak radial ordering of a point $q\in A$ from $\mathcal{L}_A$ and the sets $X_e$ in the following way: Let $p_1,\dotsc,p_{k-1}$ be the ordering of the points in $A\setminus q$ corresponding to the order of intersections of $l_q$ with the other lines in $\mathcal{L}_A$. Further, let $e_i$ be the edge of $\mathcal{L}_A$ between the intersections of $l_q$ with $l_{p_i}$ and $l_{p_{i+1}}$ (with indices understood modulo $k-1$). Then the weak radial ordering of $q$ is $p_1,X_{e_1},p_2,X_{e_2},\dotsc,X_{e_{k-1}}$.
\end{proof}

We use the feasibility criterion of \cref{lem:monoGeneralCondition} and the concept of weak radial orderings to provide algorithms for \MinMinM{} and \MaxMaxM{}.

\begin{theorem}\label{thm:generalmmM}
   If $P$ is in general position, \MinMinM{} can be solved in $O(nh + n\log n)$ time.
\end{theorem}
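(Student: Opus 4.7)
The problem reduces, as noted just before the theorem, to finding a shortest feasible edge in $P$. By \cref{lem:monoGeneralCondition}, an edge $(v,w)$ is infeasible iff it is a chord of $\ch(P)$ with an odd number of points on one side (equivalently, on each side, since $|P|-2$ is even). I would therefore handle two disjoint types of candidate edges and return the shorter outcome: (a) edges with at least one endpoint in $\overline{\ch(P)}$, all of which are feasible; (b) chords whose both endpoints lie on $\ch(P)$, only some of which are feasible.

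For case (a), after computing $\ch(P)$ and the Delaunay triangulation of $P$ in $O(n\log n)$ time, I would scan the $O(n)$ Delaunay edges and keep the shortest one incident to a point of $\overline{\ch(P)}$. Since the nearest neighbor of any point of $P$ is a Delaunay neighbor, the shortest edge from any interior point to any point of $P$ is discovered this way, which is exactly the shortest feasible edge of type (a) (or no candidate, when $P\subseteq\ch(P)$).

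For case (b), I would first compute the $\overline{\ch(P)}$-weak radial orderings of all $h$ hull vertices in $O(nh)$ time via \cref{lem:weakRadialOrderings}. Fixing $q_i \in \ch(P)$, the ordering has the form $q_{i+1}, X_1, q_{i+2}, X_2, \ldots, X_{h-2}, q_{i-1}$, where $X_k$ is the unordered set of interior points lying in the angular wedge at $q_i$ between $q_{i+k}$ and $q_{i+k+1}$. For a chord $(q_i, q_j)$ with $q_j$ at rank $m$ among the hull vertices in this ordering, every block $X_k$ lies entirely on one side of the chord, so the number of points of $P \setminus \{q_i, q_j\}$ on one side equals $(m-1) + \sum_{k<m} |X_k|$. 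After an $O(h)$ prefix-sum precomputation per $q_i$, feasibility and length of each of the $O(h^2)$ chords are evaluated in $O(1)$, and the shortest feasible one is found in $O(h^2)$ additional time.

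Taking the minimum of the two candidates yields the value of an optimal \MinMinM{} matching in total time $O(n\log n + nh + h^2) = O(nh + n\log n)$, using $h\le n$. The main subtlety I anticipate is the bridge between the weak radial ordering and the parity test: unordered blocks $X_k$ are informative enough precisely because each such block sits inside a single angular wedge at $q_i$ and therefore falls entirely on one side of every chord from $q_i$ to another hull vertex, so a block-wise count via prefix sums is sufficient.
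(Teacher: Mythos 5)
Your proposal is correct and follows essentially the same route as the paper: the same split into always-feasible edges with an interior endpoint (found via a nearest-neighbor/Delaunay computation in $O(n\log n)$ time) and hull--hull chords whose feasibility is decided from the $\overline{\ch(P)}$-weak radial orderings of \cref{lem:weakRadialOrderings} via the parity criterion of \cref{lem:monoGeneralCondition}. The only cosmetic difference is that you test each chord's parity directly with per-vertex prefix sums over the blocks $X_k$, whereas the paper propagates feasibility incrementally from $(q_i,q_j)$ to $(q_i,q_{j+1})$; both give the same $O(nh)$ bound.
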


\begin{proof}
    We initially construct $\ch(P)$ in $O(n\log h)$ time~\cite{chan1996}.
    Then, we compute the $\overline{ \ch(P)}$-weak radial orderings of the points in $ \ch(P)$ in $O(nh)$ total time using \cref{lem:weakRadialOrderings}.
    Now we look for the shortest feasible edge.    

    We first consider edges $(v,w)$ with $v \notin  \ch(P)$ and we want to find $m_1 := \min (\{\, d(v,w) \colon \allowbreak v \in P\setminus \ch(P), w\in P \,\})$.
    By \cref{lem:monoGeneralCondition}, such edges are always 
    feasible.    
    To find $m_1$ we use the fact that the nearest neighbor graph is a subgraph of the Delaunay triangulation, meaning that for each $v \in P$, if $w$ is the nearest point in $P$ to $v$, then $(v,w)$ is an edge of the triangulation.
    We find $m_1$ by finding the shortest edge among all edges incident to some interior point in the Delaunay triangulation.
    This takes $O(n\log n)$ time, as the triangulation can be constructed in $O(n\log n)$ time using standard algorithms and there are $O(n)$ edges to consider.

    Now we consider edges $(v,w)$ with both $v,w \in  \ch(P)$ and we want to find $m_2 := \min (\{\, d(v,w) \colon v,w \in  \ch(P) \,\})$.
    \Cref{lem:monoGeneralCondition} implies that an edge $(q_i,q_{i+1})$ is always
    feasible and that 
    an edge $(q_i,q_{j+1})$ is feasible if and only if 
    (i) $(q_i,q_j)$ is feasible and there is an odd number of points between $q_j$ and $q_{j+1}$ 
    in the radial ordering of $q_i$ or 
    (ii) $(q_i,q_j)$ is infeasible and there is an even number of points between 
    $q_j$ and $q_{j+1}$     
    in the radial ordering of $q_i$.
    Thus, we can find $m_2$ in $O(nh)$ time, using weak radial orderings.
    Hence, we can find the overall minimum $m_{\textnormal{sol}} =\min(m_1,m_2)$, in $O(n\log n + nh)$, concluding the proof.
\end{proof}

It is not hard to observe that the same algorithm but considering the maximum feasible values for $m_1,m_2$ and $m_{sol}$, also solves \MaxMaxM{} in $O(nh + n\log n)$ time.
Using the following lemma we can further improve the time complexity to $O(nh)$. 
    
\begin{lemma}\label{lem:generalMaxMaxSol}
  If $(v,w)$ is a longest feasible edge,
  then one of  $v \in  \ch(P)$ or $w \in  \ch(P)$. 
\end{lemma}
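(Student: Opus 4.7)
The plan is to argue the contrapositive: assuming both $v, w \notin \ch(P)$, so both lie in the strict interior of the convex hull, I will exhibit a feasible edge strictly longer than $(v,w)$, contradicting its maximality.

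Geometrically, following the intuition suggested by \cref{fig:general}b, I would extend the segment from $v$ through $w$ until it exits the hull at a point $p$ on some boundary edge $q_i q_{i+1}$. Since $w$ is a strict interior point, $w$ lies strictly between $v$ and $p$, so $d(v,p) > d(v,w)$. Because the squared Euclidean distance from $v$ is a strictly convex function along the segment $q_i q_{i+1}$ (it is a quadratic in the parameter with positive leading term $\|q_{i+1}-q_i\|^2$), its maximum on that segment is attained at one of the endpoints, yielding a hull vertex $q^* \in \{q_i, q_{i+1}\}$ with $d(v, q^*) \geq d(v, p) > d(v, w)$.

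To conclude, I invoke \cref{lem:monoGeneralCondition}: any infeasible edge must have both endpoints on $\ch(P)$, but $v \notin \ch(P)$ by assumption, so $(v, q^*)$ is automatically feasible. This produces a feasible edge strictly longer than $(v,w)$, giving the desired contradiction. The only step requiring care is the chain of strict inequalities; both rely on $w$ being a strict interior point, so that the extended ray genuinely travels past $w$ before exiting the hull, and on the strict convexity of squared distance along a segment to replace the intermediate boundary point $p$ by a true hull vertex without loss of length.
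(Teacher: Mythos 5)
Your proof is correct and follows essentially the same route as the paper's: assume both endpoints are interior, extend the ray from $v$ through $w$ to the hull boundary, and replace $w$ by a farther endpoint of the crossed hull edge, whose feasibility is automatic by \cref{lem:monoGeneralCondition} since $v \notin \ch(P)$. The only cosmetic difference is that you justify choosing the farther endpoint via convexity of the squared distance along the crossed edge, whereas the paper uses an angle-at-least-$\frac{\pi}{2}$ argument; both are valid.
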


\begin{proof}
    Assume that both $v,w \notin  \ch(P)$. 
    Let $l$ be the line through $v$ and $w$. 
    Then $l$ intersects two edges of $ \ch(P)$.
    Let $(x,y)$ be the edge whose intersection point with $l$ is closer to $w$ than to $u$, see \cref{fig:general}b.
    One of the two angles between $l$ and $(x,y)$ in the interior of $\ch(P)$ is at least $\frac{\pi}{2}$.  
    Let $x$ be the endpoint of $(x,y)$ on this side of $l$. Then $d(v,x) > d(v,w)$.  
    Due to \cref{lem:monoGeneralCondition}, the edge $(v,x)$ is feasible and since it is longer that $(v,w)$, there is a contradiction.
\end{proof}

\begin{theorem}\label{thm:generalMMM}
    If $P$ is in general position, \MaxMaxM{} can be solved in $O(nh)$~time.
\end{theorem}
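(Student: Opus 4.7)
The plan is to reuse the machinery developed for \cref{thm:generalmmM}, but exploit \cref{lem:generalMaxMaxSol} to avoid the Voronoi diagram subroutine (which costs $\Theta(n\log n)$) and to replace the standard $O(n\log n)$-time convex hull computation by one that runs in $O(nh)$.

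First, I would compute $\ch(P)$ using Jarvis march (gift wrapping) in $O(nh)$ time, rather than a sorting-based algorithm. Then I would compute the $\overline{\ch(P)}$-weak radial orderings of all $h$ points on $\ch(P)$ in $O(nh)$ time using \cref{lem:weakRadialOrderings} with $A=\ch(P)$. By \cref{lem:generalMaxMaxSol}, it suffices to search for the longest feasible edge among edges having at least one endpoint on $\ch(P)$, so I only need to process $q_1,\dotsc,q_h$.

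For each hull vertex $q_i$, I would proceed in two parts. First, I scan all points $w\in P\setminus\ch(P)$ and keep the maximum of $d(q_i,w)$; by \cref{lem:monoGeneralCondition} every such edge is automatically feasible, so this scan costs $O(n)$. Second, I walk through the other hull vertices $q_{i+1},q_{i+2},\dotsc,q_{i-1}$ in counterclockwise order (which, since $q_i\in\ch(P)$, coincides with the radial order of hull vertices around $q_i$) and maintain the feasibility flag by the same parity recurrence used in the proof of \cref{thm:generalmmM}: $(q_i,q_{i+1})$ is feasible, and the feasibility of $(q_i,q_{j+1})$ flips exactly when the number of non-hull points between $q_j$ and $q_{j+1}$ in the radial ordering of $q_i$ is odd, a quantity read directly off $|X_{e}|$ in the weak radial ordering. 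Summing the sizes $|X_e|$ along this walk is $O(n)$, so each hull vertex costs $O(n)$ and the total over all of $\ch(P)$ is $O(nh)$. Returning the longest feasible edge witnessed over the whole process yields the \MaxMaxM{} value, which is optimal by \cref{lem:generalMaxMaxSol}.

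The only delicate point, and the main thing to get right, is keeping the convex hull step within the $O(nh)$ budget (so using Jarvis march rather than a divide-and-conquer or Graham scan); everything else is an immediate consequence of \cref{lem:monoGeneralCondition}, \cref{lem:weakRadialOrderings}, and \cref{lem:generalMaxMaxSol}, following the same pattern as \cref{thm:generalmmM} but with the $O(n\log n)$-time far-point/Voronoi step dropped since \cref{lem:generalMaxMaxSol} rules out the case that both endpoints of an optimal edge lie in $P\setminus\ch(P)$.
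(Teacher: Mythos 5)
Your proposal is correct and follows essentially the same route as the paper: replace minimization by maximization, use \cref{lem:generalMaxMaxSol} to restrict the non-hull case to edges with one endpoint on $\ch(P)$ (handled by brute force in $O((n-h)h)$ time), and reuse the weak-radial-ordering parity scan for hull--hull edges. Your explicit use of Jarvis march to keep the hull construction itself within the $O(nh)$ budget is a detail the paper's proof leaves implicit but which is indeed needed for the stated bound, and you handle it correctly.
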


\begin{proof}
    The algorithm is similar to the \MinMinM{}, described in \cref{thm:generalmmM}, with two changes: the minimizations of $m_1,m_2,m_{sol}$ are replaced by maximizations and 
    for $m_1$ we consider edges $(v,w)$ with $v \in P \setminus  \ch(P)$ and $w\in \ch(P)$. 
    This is sufficient due to \cref{lem:generalMaxMaxSol}.
    This reduces the running time for finding $m_1$ to $O((n-h)h)$ by 
    simply comparing the distances of all $(n-h)h$ pairs of points.
    Hence, the overall running time is reduced to $O((n - h)h + nh) = O(nh)$.    
\end{proof}

\paragraph{Using halfplane range queries.}

Now we take another approach to decide the feasibility of a pair of points from $\ch(P)$. The task of determining the number of points of a given point set lying on one side of a given straight line is known as \emph{halfplane range query} and has been studied extensively over the last decades, see e.g., \cite{agarwal2017}. Using these results to check the criterion of \cref{lem:monoGeneralCondition}, we obtain the following algorithms that are more efficient than those of \cref{thm:generalmmM,thm:generalMMM}, when $h=\Omega(n^c)$ for some constant $c>0$.

\begin{theorem}\label{thm:algHalfplaneRangeQueries}
    Let $P$ be in general position.
    Then \MinMinM{} and \MaxMaxM{} can be solved in $O(n^{1+\epsilon} + n^{2/3} h^{4/3} \log^3 n)$ time where $\epsilon>0$ is an arbitrary constant.
\end{theorem}

\begin{proof}
    We show that the feasibility of all pairs of points of $\ch(P)$ can be decided in the claimed running times. Then, with the aforementioned algorithm and an additional effort of $O(n \log n)$ time, \MinMinM{} and \MaxMaxM{} can be solved.
    
    We distinguish two classes of values of $h$.
    Let $h\leq n^{1/4}$. According to \cite{matousek1993}, halfplane range queries can be answered in $O(n^{1/2})$ time after a preprocessing step costing $O(n^{1+\epsilon})$ time.
    We have to do ${\binom{h}{2}} = O(h^2)$ queries, so the time needed for the queries is $O(h^2 n^{1/2}) = O(n)$. 
    Therefore the preprocessing step dominates the overall time needed, resulting in $O(n^{1+\epsilon})$ total time.
    
    Now let $h \geq n^{1/4}$. We set $m = n^{2/3} h^{4/3}$. 
    Then $n\leq m \leq n^2$ is satisfied,
    which is required by \cite{matousek1993} for the following to hold: Halfplane range queries can be answered in $O(\frac{n}{m^{1/2}} \log^3 \frac{m}{n})$ time after a preprocessing step costing $O(n^{1+\epsilon} + m \log^\epsilon n)$ time. Thus the time needed for the $O(h^2)$ queries is $O(n^{2/3} h^{4/3} \log^3 n)$ and for the preprocessing is $O(n^{1+\epsilon} + n^{2/3} h^{4/3} \log^\epsilon n)$, 
    so $O(n^{1+\epsilon}+ n^{2/3} h^{4/3} \log^3 n)$ time overall.
    Combining the two cases for $h$, the claim follows. 
\end{proof}

Note that given an extremal feasible edge $(v,w)$, we can obtain a matching including $(v,w)$ in $O(n\log n)$ time as follows.
Let the line through $(v,w)$ separate $P$ in two sets $A,B$.
If $A,B$ have an even number of points each, we apply an $O(n\log n)$ time algorithm~\cite{hershberger1992,lo1994} to $A$ and $B$ separately.
If they have an odd number of points, similar to the proof of \cref{lem:monoGeneralCondition}, there exists a feasible edge $(x,y)$ of $\ch(P)$ with $x\in A$ and $y \in B$ and we apply an $O(n\log n)$ algorithm to the sets $A\setminus \{x\}$ and $B\setminus \{y\}$.

\paragraph{Remarks for this section.} 

For the \MinMinM{} problem there is an $\Omega(n\log n)$ lower bound on the time 
complexity, even when $h = O(1)$,
via a reduction from the \emph{closest pair of points problem}: the point set is surrounded by a large triangle or quadrangle (to obtain an even number of points in total) such that all pairs of points of the original point set are feasible and the shortest edge of a \MinMinM{} matching consists of the closest pair of points of the original point set. The complexity of the closest pair of points problem is known to have a lower bound of $\Omega(n\log n)$ via a reduction from the \emph{element distinctness problem}~\cite{benor1983}.

Regarding bichromatic point sets, it remains unknown if it is possible to verify in polynomial time whether a given
edge is feasible or not.
A positive answer would imply polynomial time algorithms for \MinMinB{} and \MaxMaxB{}.
Finally, regarding \MaxMinM{} and \MaxMinB{}, we believe that they are both \np{} problems.

\section{Points in convex position}\label{sec:convex}

In this section, we assume that points in $P$ are in convex position and the counterclockwise ordering of points along $\ch(P)$, $p_0,\dotsc,p_{2n-1}$, is given.
To simplify the notation, we address points by their indices, i.e., we refer to $p_i$ as $i$. Arithmetic operations with indices are done modulo $2n$.
We call edges of the form $(i, i+1)$ \emph{boundary edges} and we call the remaining edges \emph{diagonals}.

Regarding bichromatic point sets, an important concept which captures well the nature of matchings in convex position is the theory of \emph{orbits}~\cite{savic2018}.
More specifically, $P$ is partitioned into orbits.
Each orbit is a a balanced sets of points, and the colors of the points along the boundary of the orbit are alternating,
see~\cref{fig:convex}a.
An important property is that a bichromatic edge $(b,r)$ is feasible if and only if $b$ and $r$ are in the same orbit.

When points are in convex position we can find an arbitrary matching in $O(n)$ time.

\begin{lemma}
    \label{lem:convexMatching}
    If $P$ is in convex position, we can construct an arbitrary matching in $O(n)$ time, both in the monochromatic and bichromatic case.
\end{lemma}

\begin{proof}
    In the monochromatic case, we can trivially pair all boundary edges, e.g.~of the form $(2k,2k+1)$. 
    In the bichromatic case, we first calculate the orbits in $O(n)$ time~\cite
    {savic2018}.
    Then, we choose a color, e.g. red, and in each orbit, we match each red point to the next blue point in the counterclockwise ordering along the boundary of this orbit.
    \cite[Property 20]{savic2018} guarantees that such edges do not cross.
\end{proof}

We first present a general dynamic programming approach and then we give better algorithms for the \MinMin{} and \MaxMax{} variants.
\subsection{A dynamic programming approach}

We can easily solve all four optimization variants in $O(n^3)$ time by a classic dynamic programming approach which is also used in~\cite{abu2014,biniaz2014,carlsson2015} for \MinMax{} problems.
We briefly explain this approach for completeness.

Let $F_i$ be the set of points that point $i$ induces a feasible edge with.
In the monochromatic case, an edge $(i,j)$ is feasible if and only if $i+j$ is an odd number~\cite{abu2014}.
In the bichromatic case $(i,j)$ is feasible if and only if $\pij$ is balanced~\cite{biniaz2014,carlsson2015,savic2018}.

Let $\mA, \mB \in \{\min, \max\}$ be the two optimization functions we use, e.g., if $\mA=\min$ and $\mB=\max$, then we are dealing with the \MinMax{} problem.

The optimal solution restricted to $\pij$ can be recursively expressed as
\[
    M[i,j] = \underset{k \in \{ i+1,j \} \cap F_i}{\mA}  \mB\{d(i,k), M[i+1,k-1], M[k+1,j] \} \enspace .
\]

To determine whether an edge $(i,k)$ is feasible in the bichromatic case, we maintain the number of red and blue vertices encountered while iterating from $i+1$ to $j$. Thus, the total time used for calculating the table $M$ is $O(n^3)$.

\subsection{\MinMin{} and \MaxMax{} matchings in convex position}

We make use of the following two algorithms.
Given two convex polygons $P$ and $Q$, Toussaint's algorithm~\cite{toussaint1984} finds in $O(|P|+|Q|)$ time the vertices that realize the minimum distance between $P$ and $Q$.
Analogously, Edelsbrunner's algorithm~\cite{edelsbrunner1985} finds in $O(|P|+|Q|)$ time the vertices that realize the maximum distance between $P$ and $Q$.

\begin{theorem}  \label{thm:convexmmMandMMM}
    If $P$ is convex, \MinMinM{} and \MaxMaxM{} can be solved in $O(n)$ time.
\end{theorem}

\begin{proof}
    A pair $(i,j)$ is feasible if and only if $i$ and $j$ are of different parity. 
    This suggests that we can split $P$ into two (convex) sets, $P_{\textnormal{odd}}$ and $P_{\textnormal{even}}$,  one containing the even and the other containing the odd indices.
    Then, any edge $(v,w)$ with $v \in P_{\textnormal{even}}$ and $w \in P_{\textnormal{odd}}$ is feasible.
    We now can apply Toussaint's algorithm~\cite{toussaint1984} for \MinMinM{} or Edelsbrunner's algorithm~\cite{edelsbrunner1985} for \MaxMaxM{}.
    All steps can be done in $O(n)$ time.
\end{proof}

\begin{figure}[t]
    \centering
    \begin{minipage}{0.32\textwidth}
        \centering
        \includegraphics[width=0.95\textwidth,page=1]{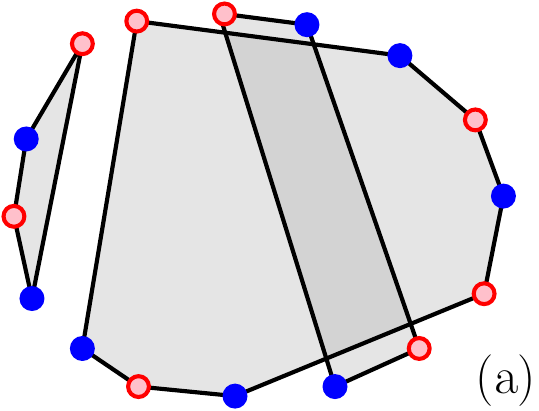}
    \end{minipage}
    \hfill
    \begin{minipage}{0.32\textwidth}
        \centering
        \includegraphics[width=0.95\textwidth,page=2]{convex_algorithm.pdf}
    \end{minipage}
    \hfill    
    \begin{minipage}{0.32\textwidth}
        \centering
        \includegraphics[width=0.95\textwidth,page=3]{convex_algorithm.pdf}
    \end{minipage}
    \caption{
    \MinMinB{} for $P$ in convex position.
    (a) Find orbits. (b) Find the shortest edge between the blue and red polygon of an orbit. (c) Extend to a perfect matching. \label{fig:convex}
    }
\end{figure}

We can obtain the same time complexity for bichromatic point sets, by combining the monochromatic algorithm with the theory of orbits as follows.

\begin{theorem}\label{thm:convexmmBandMMB}
    If $P$ is convex, \MinMinB{} and \MaxMaxB{} can be solved in $O(n)$ time.
\end{theorem}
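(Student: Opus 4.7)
The plan is to reduce each problem to finding an extremal feasible bichromatic edge, and then to exploit the orbit structure together with the convex polygon distance algorithms of Toussaint and Edelsbrunner. For any matching $M$, every edge of $M$ is feasible; conversely, any feasible edge extends to some matching containing it. Hence the \MinMinB{} value equals the length of the shortest feasible edge and the \MaxMaxB{} value equals the length of the longest feasible edge. By the cited orbit characterization, a bichromatic edge $(b,r)$ is feasible if and only if $b$ and $r$ lie in the same orbit, so it suffices to find the extremum over all same-orbit blue-red pairs.

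First I would compute the orbit decomposition of $P$ in $O(n)$ time by a stack-based cyclic scan of $p_0,\dotsc,p_{2n-1}$: push each new point, but whenever the incoming point has opposite color to the top of the stack, pop and pair the two; each time the stack returns to empty, the points consumed since it was last empty form one complete orbit. A short preprocessing step chooses the starting index (for instance, any point whose cyclic prefix is not strictly imbalanced in a fixed direction) so that the scan closes exactly on orbit boundaries.

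Next, for each orbit $O_k$ with blue vertices $B_k$ and red vertices $R_k$, both $B_k$ and $R_k$ are subsets of a convex point set and hence form convex polygons in their inherited cyclic order. I would then invoke Toussaint's algorithm on the pair $(B_k,R_k)$ to obtain the closest blue-red vertex pair inside $O_k$ in $O(|B_k|+|R_k|)$ time, and take the minimum of these values over all orbits; this yields the \MinMinB{} value. Replacing Toussaint by Edelsbrunner gives \MaxMaxB{}. Since the orbits partition $P$, we have $\sum_k(|B_k|+|R_k|)=2n$, so the total running time is $O(n)$.

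The main obstacle I expect is the linear-time orbit computation itself, together with the verification that the extremal feasible edge is realized by a same-orbit pair; once the orbit-based feasibility characterization is granted, the remainder is just two direct applications of well-known convex distance algorithms combined with a summation over the partition.
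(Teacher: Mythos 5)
Your proposal is correct and follows essentially the same route as the paper: reduce to finding an extremal feasible edge, use the orbit characterization of feasibility (computed in $O(n)$ time via the cited result), and apply Toussaint's and Edelsbrunner's algorithms to the blue and red convex polygons within each orbit, summing to $O(n)$ over the partition. The only cosmetic difference is that you sketch your own stack-based orbit computation where the paper simply cites the known linear-time construction.
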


\begin{proof}
    We first compute all orbits in $O(n)$ time~\cite{savic2018}.
    Due to the alternation of red and blue points along the boundary of the orbits,
    a single orbit can be considered as a set of points in the monochromatic setting, with respect to the feasibility of the edges. 
    
    Thus, for \MinMinB{}, we can select for each orbit, the shortest edge in $O(n)$ time using \cref{thm:convexmmMandMMM}.
    The shortest edge out of all these selected edges is the shortest overall feasible edge of $P$, see~\cref{fig:convex}b.
    Selecting the maximum edges instead solves \MaxMaxB{}.
\end{proof}

We remark that we can construct, in $O(n)$ time, optimal matchings from \cref{thm:convexmmMandMMM,thm:convexmmBandMMB} after finding an extremal feasible edge $(i,j)$, by simply applying \cref{lem:convexMatching} to the sets $\{i+1,\dots,j-1 \}$ and $\{j+1,\dots,i-1 \}$, see~\cref{fig:convex}c.

\nopagebreak

\paragraph{Remarks for this section.}

We believe that $o(n^3)$-time algorithms can be devised for \MaxMin{} by using some ideas similar to those used for \MinMax{} in~\cite{savic2017,savic2018}.

\section{Points on a circle}
\label{sec:chords}

In this section, we assume that all points of $P$ lie on a circle. Obviously, the points are also in convex position, so all the results from~\cref{sec:convex} also apply here.
We present algorithms which either achieve a better time complexity or are significantly simpler.
We use the notation from \cref{sec:convex}.

In addition to the convex position, the results in this section rely on a property of point sets lying on a circle, which we call the \emph{decreasing chords property}. 
A point set $P$ has this property if, for any edge $(i,j)$, it happens that for at least one of its sides, all the possible edges between two points on that side are not longer than $(i,j)$ itself, see \cref{fig:circle}a.

Due to the decreasing chords property, we can easily infer the following.

\begin{lemma}
    \label{lem:shortestIsBoundary}
    Any shortest edge of a matching on $P$ is a boundary edge. \qed
\end{lemma}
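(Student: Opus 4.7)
My plan is to argue by contradiction. Suppose some shortest edge $(i,j)$ of a perfect non-crossing matching $M$ on $P$ is not a boundary edge, i.e.\ it is a diagonal. I will produce a strictly shorter edge in $M$, contradicting the minimality of $(i,j)$.

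First I would record the structural consequence of being non-crossing and perfect: because no edge of $M$ may cross $(i,j)$, the points of $P$ strictly on each side of the chord $(i,j)$ are forced to be matched among themselves. In particular, each side carries an even number of points, and since $(i,j)$ is a diagonal the count on each side is strictly positive, so at least $2$.

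I would then invoke the decreasing chords property applied to $(i,j)$ to single out a side $S$ on which every chord between two points has length at most $d(i,j)$. Since $S$ contains $\geq 2$ matched points, the restriction of $M$ to $S$ contains at least one edge $(a,b)$, and the property immediately gives $d(a,b) \leq d(i,j)$.

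The step requiring the most care is to upgrade this weak inequality to a strict one. The definition in the paper is phrased with ``not longer than'', i.e.\ $\leq$, but on a circle the inequality becomes strict whenever $a$ and $b$ lie strictly inside the chosen arc. Concretely, one can always take $S$ to be the side whose arc has angular measure at most $\pi$; then the arc between $a$ and $b$ has strictly smaller angular measure than the arc between $i$ and $j$ on side $S$, because $a \neq i$ and $b \neq j$ each contribute a positive gap. Since the chord length $2R\sin(\alpha/2)$ is strictly increasing in $\alpha$ on $[0,\pi]$, this gives $d(a,b) < d(i,j)$, contradicting the choice of $(i,j)$ as a shortest edge of $M$.

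Hence the only genuine subtlety is that the stated decreasing chords property is non-strict while the conclusion needs strictness; but the strict version comes for free from the monotonicity of chord length in arc length on arcs of size at most $\pi$, so this is a short extra observation rather than a serious obstacle. Everything else in the proof is a direct consequence of the non-crossing and perfect properties of $M$.
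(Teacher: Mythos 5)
Your proof is correct and follows exactly the argument the paper leaves implicit behind its ``\qed'': a diagonal in a non-crossing perfect matching forces at least one matched pair on each of its sides, and the decreasing chords property then yields an edge at least as short on one side. Your extra observation that the paper's property is stated non-strictly, while the conclusion ``\emph{any} shortest edge is a boundary edge'' needs strictness (which you recover from the monotonicity of chord length in arc length on the minor-arc side), is a legitimate and correctly resolved fine point rather than a deviation from the intended approach.
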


\subsection{\MaxMinM{} matching on a circle}

\cref{lem:shortestIsBoundary} suggests an approach for \MaxMin{} problems by \emph{forbidding} short boundary edges and checking whether we can find a matching without them. 

Let some boundary edges be \emph{forbidden} and the remaining be \emph{allowed}. 
A \emph{forbidden chain} is a maximal sequence of consecutive forbidden edges.
A forbidden chain has endpoints $i$ and $j$ if edges $(i,i+1),\dots,(j-1,j)$ are forbidden and edges $(i-1,i)$ and $(j,j+1)$ are allowed, see \cref{fig:circle}b.

\begin{figure}[t]
    \centering
    \begin{minipage}{0.32\textwidth}
        \centering
        \includegraphics[width=0.95\textwidth,page=3]{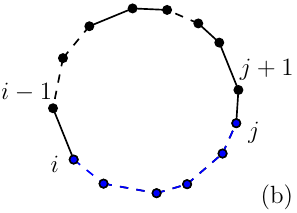}
    \end{minipage}
    \hfill
    \begin{minipage}{0.32\textwidth}
        \centering
        \includegraphics[width=0.95\textwidth,page=1]{forbidden_edges.pdf}
    \end{minipage}
    \hfill    
    \begin{minipage}{0.32\textwidth}
        \centering
        \includegraphics[width=0.95\textwidth,page=2]{forbidden_edges.pdf}
    \end{minipage}
    \caption{(a) The decreasing chords property. (b) A forbidden chain. (c) Proof of \cref{lem:mmMchains}. (Forbidden edges are shown dashed and allowed edges are shown solid.)
    \label{fig:circle}
    }
\end{figure}

\begin{lemma}\label{lem:mmMchains}
    There exists a matching without the forbidden edges if and only if $l < n$, where $l$ is the length of a longest forbidden chain.
\end{lemma}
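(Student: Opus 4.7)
The plan is to prove both directions separately.

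For the only-if direction, I would proceed by contradiction: assume a matching $M$ uses no forbidden edge while some forbidden chain $(i,i+1),\ldots,(j-1,j)$ has length $l\geq n$. Classify each of the $l+1$ chain vertices as \emph{inside} (matched by $M$ to another chain vertex) or \emph{outside} (matched to a non-chain vertex). Because the non-chain vertices lie on the complementary arc of the circle, the non-crossing property of $M$ forces the inside-vertices to split, according to the outside-vertices, into maximal runs of consecutive chain indices, each matched entirely within itself. A short induction shows that any non-crossing perfect matching on $2m\geq 2$ consecutive points in convex position must contain an edge $(k,k+1)$ between two of those consecutive points; applied here, a run of size $\geq 2$ produces an edge $(k,k+1)$ with $i\leq k<j$, which is a forbidden edge of the original instance---contradiction. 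Hence every run has size $0$, so all $l+1$ chain vertices are matched to distinct non-chain vertices; but there are only $2n-l-1$ non-chain vertices, yielding $l+1\leq 2n-l-1$, i.e., $l\leq n-1$, contradicting $l \geq n$.

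For the if direction, I would use induction on $n$. If no boundary edge is forbidden, take the matching $\{(0,1),(2,3),\ldots,(2n-2,2n-1)\}$. Otherwise, I would select a suitable allowed boundary edge $(k,k+1)$, remove its two endpoints, and apply the inductive hypothesis to the resulting $2(n-1)$-point sub-instance on the circle. The removal replaces three consecutive original boundary edges by a single new one, $(k-1,k+2)$, which is a diagonal of the original set and hence automatically allowed in the sub-instance. Combining the inductive matching with $(k,k+1)$ gives the desired matching of the original.

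The main obstacle is choosing $(k,k+1)$ so that the sub-instance's longest forbidden chain drops to at most $n-2$; in particular, every maximum-length chain must be shortened by the removal. A counting argument handles this: if $t$ forbidden chains have length $n-1$, the remaining $2n-t(n-1)$ allowed boundary edges distribute into the $t$ allowed arcs separating consecutive chains, each containing at least one edge, which forces $2n\geq tn$ and thus $t\leq 2$. In the tight case $t=2$, each of the two allowed arcs consists of exactly one edge, flanked on both sides by the two maximum-length chains, so its removal shortens both simultaneously to length $n-2$. When $t=1$, either of the two allowed edges adjacent to the unique maximum-length chain shortens it, while all other chains already have length at most $n-2$. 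When $t=0$ (i.e., $l\leq n-2$), any allowed edge works trivially. In all cases a suitable $(k,k+1)$ exists and the induction closes.
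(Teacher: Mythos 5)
Your proof is correct and follows essentially the same strategy as the paper: a counting/pigeonhole argument combined with the fact that any arc matched within itself must contain a boundary edge for the only-if direction, and induction that removes the endpoints of an allowed boundary edge adjacent to (or, in the tight case, between) the longest forbidden chains for the if direction. The paper streamlines your case analysis on $t$ by always matching the allowed edge $(i-1,i)$ just before a longest chain and verifying the inductive condition with a single counting step, but the underlying argument is the same.
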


\begin{proof}
    If the boundary edges are either all forbidden or all allowed, then the statement trivially holds. So, let us that assume there exists at least one forbidden and at least one allowed boundary edge.    

    Consider a forbidden chain of length $l$ which has endpoints $i$ and $j$. First, we assume that $l \geq n$. Then, at least one matched pair $(a,b)$ has both endpoints in $\pij$. Thus, either $(a,b)$ is a forbidden boundary edge, or it splits $P$ in a way that all points on one side of $(a,b)$ lie completely in $\pij$. So, there exists a matched boundary edge inside $\pij$ and, thus, a matching without forbidden edges does not exist.
    
    Now let us assume that $l < n$. We construct a matching without using forbidden edges with a recursive approach. We match the pair $(i-1, i)$ and consider the set $P' = P \setminus \{i-1, i\}$, see \cref{fig:circle}c. In $P'$, $(i-2,i+1)$ is an allowed boundary edge since it is a diagonal in $P$. We show that $P'$ can be matched by showing that the condition of the lemma holds for $P'$.
    
    Let $i'$ and $j'$ be the endpoints of a longest forbidden chain in $P'$, going counterclockwise from $i'$ to $j'$, and let $l'$ be its length.
    If $l' < n-1$, a matching of $P'$ without forbidden edges can be computed recursively.
    Otherwise, if $l' \geq n-1$, from $l' \leq l < n$ we infer that $l' = l = n-1$.
    Since $l=l'$, the new longest forbidden chain is disjoint from $\{i+1,\dotsc,j\}$, so it is contained in $\{j+1,\dotsc,i-2\}$, see \cref{fig:circle}c. But since $|P'|=2n-2$ and $|\{i+1,\dotsc,j\}|=n-1$, then $|\{j+1,\dotsc,i-2\}| = n-1$ and thus $l' < n-1$, a contradiction.
\end{proof}

\MaxMin{} is equivalent to finding the largest value $\mu$ such that there exists a matching with all edges of length at least $\mu$. Due to \cref{lem:shortestIsBoundary}, it suffices to search for $\mu$ among the lengths of the boundary edges. 
By \cref{lem:mmMchains}, this means that we need to find the maximal length $\mu$ of a boundary edge such that there are no $n$ consecutive boundary edges all shorter than $\mu$. 
An obvious way to find $\mu$ is to employ binary search over the boundary edge lengths and check at each step whether the condition is satisfied or not, which yields an $O(n \log n)$-time algorithm.

A faster approach to finding $\mu$ is as follows. 
Consider all $2n$ sets of $n$ consecutive boundary edges and associate to each set the longest edge in it. 
Then, out of the $2n$ longest edges, we search for the shortest one. This can be done in $O(n)$ time by using a data structure for \emph{range maximum query}, see e.g.~\cite{fischer2011}. 
However, our approach fits under the more restricted \emph{sliding window maximum problem}, for which several simple optimal algorithms are known, see e.g.~\cite{tangwongsan2017}.

\begin{theorem}\label{lem:chordsMmM}
    \label{thm:chordsMmM}
    If $P$ lies on a circle, \MaxMinM{} can be solved in $O(n)$ time.
\end{theorem}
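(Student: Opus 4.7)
The plan is to combine Lemmas \ref{lem:shortestIsBoundary} and \ref{lem:mmMchains} to reduce \MaxMinM{} on $P$ to a cyclic sliding window maximum computation. By \cref{lem:shortestIsBoundary}, the shortest edge of any perfect non-crossing matching on $P$ is a boundary edge, so the optimal \MaxMinM{} value $\mu^\ast$ is realized by some boundary edge length. Hence it suffices to search for $\mu^\ast$ within the $2n$ boundary edge lengths.

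For a candidate threshold $\mu$, call a boundary edge \emph{forbidden} if its length is strictly less than $\mu$. By \cref{lem:mmMchains}, a matching in which every edge has length at least $\mu$ exists if and only if the longest forbidden chain has length less than $n$; equivalently, every cyclic window of $n$ consecutive boundary edges must contain at least one boundary edge of length $\geq \mu$. Therefore $\mu^\ast$ equals the minimum, taken over all $2n$ cyclic windows of $n$ consecutive boundary edges, of the maximum boundary edge length within the window.

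Algorithmically, I would first compute the $2n$ boundary edge lengths in $O(n)$ time. Then I would compute the maxima of all $2n$ cyclic windows of size $n$ in $O(n)$ time using a deque-based sliding window maximum algorithm such as the one of \cite{tangwongsan2017}, handling the wrap-around by duplicating the length array to size $4n$ and reading off the $2n$ relevant windows. A final $O(n)$ scan returns the minimum of these window maxima, yielding $\mu^\ast$. An explicit optimal matching can then be produced in $O(n)$ additional time by applying the recursive construction from the proof of \cref{lem:mmMchains}, iteratively matching the endpoint of a longest forbidden chain with its allowed neighbor. The only real subtlety is the cyclic nature of the windows, which the array-duplication trick handles cleanly; every other step is a direct application of the two lemmas together with a standard data-structural primitive.
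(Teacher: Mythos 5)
Your proposal is correct and follows essentially the same route as the paper: reduce to boundary edges via \cref{lem:shortestIsBoundary}, characterize feasibility of a threshold via \cref{lem:mmMchains} as ``every cyclic window of $n$ consecutive boundary edges contains an edge of length at least $\mu$,'' and compute the minimum of the $2n$ window maxima with a sliding-window-maximum algorithm in $O(n)$ time. The only cosmetic difference is that you handle the cyclic wrap-around by duplicating the array, whereas the paper simply continues sliding the window around the circle; both are standard and equivalent.
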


\begin{proof}
    In sliding window maximum algorithm we maintain the \emph{window}, which is a sequence of numbers that we can modify in each step either by adding a new number to it, or by removing the least recently added number, if the sequence is not empty. After each operation the maximum element currently in the window is reported. This is all done in aggregate $O(n)$ time, where $n$ is the number of used numbers.
    
    For our application, we first fill the window with edges $(0, 1), (1, 2), \ldots, \allowbreak (n-1, n)$, in that order, and then, in the $k$-th step ($k$ starting from $0$), we modify the window by removing the edge $(k, k+1)$ and adding the edge $(n+k, n+k+1)$. We repeat this until $k$ reaches $2n-1$, that is, until we go around the circle one full time, asking for the maximum value in the window after each step. Following the previous discussion, the result we are looking for will be the minimum of all these maximums.
\end{proof}

Using \cref{lem:circleMatchingThreshold}, we can also construct an optimal matching within the same time complexity, as the following lemma states.

\newpage

\begin{lemma}\label{lem:circleMatchingThreshold}
    Given a value $\mu>0$, a matching consisting of edges of length at least $\mu$ can be constructed
    in $O(n)$ time if it exists.
\end{lemma}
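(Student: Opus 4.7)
The plan is to implement the recursive construction inside the proof of \cref{lem:mmMchains} so that every step runs in amortized constant time. First I would compute $d(i,i+1)$ for every $i$ and mark the boundary edge $(i,i+1)$ as \emph{forbidden} when $d(i,i+1)<\mu$ and \emph{allowed} otherwise; this takes $O(n)$ time. By \cref{lem:shortestIsBoundary} together with the decreasing chords property, the shortest edge of any matching on $P$ is a boundary edge, so a matching has all edges of length at least $\mu$ if and only if it uses no forbidden boundary edge. Hence it suffices to build a perfect non-crossing matching that avoids the forbidden boundary edges, which by \cref{lem:mmMchains} exists precisely in the case we are told to handle.

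To simulate the recursion, I would maintain the current point set as a doubly linked list, store every maximal run of forbidden edges (a forbidden chain) in a bucket indexed by its current length, and remember a pointer to the highest non-empty bucket. In each iteration I take any chain $C$ from the top bucket, call its leftmost vertex $v_0$ and $p$ the predecessor of $v_0$ in the list; by maximality of $C$ the edge $(p,v_0)$ is allowed, so I output it as a matching edge and splice $p$ and $v_0$ out of the list. The new list edge joining the predecessor of $p$ to the successor $v_1$ of $v_0$ is a diagonal of $P$, hence not in the forbidden set, which prevents two chains from ever merging. The splicing shrinks $C$ by one from the left and, when the edge that previously preceded $p$ was forbidden, also shrinks the chain ending at $p$ by one from the right; both updates amount to moving at most two chains one bucket down and possibly emptying them.

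Preparing the list, the chains and the buckets is $O(n)$; each of the $n$ matching steps performs constant-time list surgery plus $O(1)$ bucket updates; and the top-bucket pointer is monotonically non-increasing, so its total drift is bounded by $n$. Altogether the construction runs in $O(n)$ time. The main obstacle is to argue that always peeling a chain of currently maximum length preserves the invariant ``longest chain length is strictly less than half the remaining vertex count'', which is what keeps the reduced instance solvable and stops the recursion from getting stuck; this is exactly the counting step carried out in the proof of \cref{lem:mmMchains}. A minor subtlety is handling degenerate moments: chains whose length drops to zero must be removed from the bucket structure, and once no forbidden chain remains the leftover vertices are paired off by the trivial matching of consecutive linked-list neighbors.
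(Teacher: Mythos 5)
Your proposal is correct and follows essentially the same route as the paper: forbid boundary edges shorter than $\mu$, run the peeling process from the proof of \cref{lem:mmMchains} on a longest chain at each step, and implement it with a doubly linked list plus buckets of chains indexed by length, exploiting that chains never merge and only shrink so the maximum-length pointer only moves downward. The paper's version differs only in a minor implementation detail (it lazily relocates a chain to its correct bucket when it is popped, rather than eagerly maintaining bucket membership), and it likewise defers the solvability invariant to the counting argument of \cref{lem:mmMchains}.
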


\begin{proof}
    Since $P$ fulfills the decreasing chords property, the shortest edge (or one of the shortest edges if there is more than one)
    of each matching is a boundary edge. Therefore, a matching consists of edges of length at least $\mu$ if and only if
    the lengths of all its boundary edges are at least $\mu$. We forbid all boundary edges shorter than $\mu$.
    If the longest chain of forbidden boundary edges has length at least $n$, then the wanted matching does not exist due
    to \cref{lem:mmMchains}. Otherwise, we apply the process described in the proof of \cref{lem:mmMchains}. However, we need
    to be careful about how to iteratively find the edge we want to match and remove in each iteration so that the whole
    construction takes only $O(n)$ time.

    First we note that in the process described in the proof of \cref{lem:mmMchains} chains never merge. In each step we reduce the length of exactly one or exactly two chains by $1$. When the length of a chain is reduced to $0$, the chain disappears. So, all the chains maintain their identity throughout the process. For each chain we keep track of its length, and its first endpoint. Also, since we will be removing points, for each point we keep track of the points preceding and following it.

    At the beginning, we calculate the length of all the chains and initialize an array of buckets, where bucket $b$ is a list of all the chains of length exactly $b$. We go through the array starting from the bucket $n-1$ (all buckets above that are empty, by \cref{lem:mmMchains}) and move down towards the bucket $0$. In each step we select the first chain from the current bucket, and check if its length matches the index of the bucket it is in. If it does not, then we move it to the correct bucket, and proceed. If they do match then the selected chain is the chain of the maximum length. If $i$ is the first endpoint of the selected chain, we add the edge between $i$ and the point preceding $i$ to the matching. In constant time we can update everything that we are keeping track of, and continue.

    After we reach bucket $0$, there are no chains left, so we are left with a set of points where all edges are allowed. We can match them in linear time, by matching every second edge.
\end{proof}

\subsection{Other matchings on a circle}

\begin{theorem}\label{thm:convexmMM}
    If $P$ lies on a circle, \MinMaxM{} can be solved in $O(n)$ time. 
\end{theorem}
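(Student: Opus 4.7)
The plan is to establish a structural lemma dual to \cref{lem:shortestIsBoundary}: some \MinMaxM{}-optimal matching consists entirely of boundary edges. On the $2n$-cycle of boundary edges there are exactly two perfect matchings, $M_e = \{(2k, 2k+1) : 0 \le k < n\}$ and $M_o = \{(2k+1, 2k+2) : 0 \le k < n\}$ (indices modulo $2n$), both monochromatically feasible since every consecutive pair has odd sum. Granting the lemma, $\MinMaxM = \min(\max M_e, \max M_o)$, and this minimum---together with the optimal matching, which is whichever of $M_e, M_o$ attains it---is computed in $O(n)$ time by a single linear pass through the $2n$ boundary edge lengths.

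I would prove the lemma by an exchange argument: for every non-crossing perfect matching $M$, I construct an all-boundary non-crossing perfect matching $M'$ with $\max M' \le \max M$. Let $e = (i, j) \in M$ be any diagonal, so $|j - i| \ge 3$ is odd, and let $k_1, \dots, k_{2m}$ (with $2m \ge 2$) be the consecutive points on the shorter arc side of $e$. Because $M$ is non-crossing, these points are matched among themselves by a sub-matching of exactly $m$ edges. Replace $e$ together with this sub-matching by the $m+1$ boundary edges $(i, k_1), (k_2, k_3), \dots, (k_{2m}, j)$. These edges are feasible (consecutive indices have odd sum), pairwise non-crossing, and lie within the region bounded by $e$ and its shorter arc (so they cross nothing in $M$ outside that region); moreover, each has arc length strictly less than the shorter arc of $e$, so by the strict monotonicity of the chord-length function $2R\sin(\theta/2)$ on $(0, \pi]$ each is strictly shorter than $d(i,j) \le \max M$. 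Thus the new matching has strictly fewer diagonals and maximum at most $\max M$; iterating at most $n$ times yields the desired $M'$.

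The main obstacle is the careful verification of the exchange step: that the $m+1$ replacement edges are simultaneously non-crossing with the unchanged portion of $M$, feasible under the parity constraint, and strictly shorter than $d(i,j)$. All three facts reduce to applying the decreasing chords property to the shorter arc side of the diagonal $e$, combined with the fact that each replacement arc $\alpha_{i+t}$ is a proper sub-arc of the shorter arc of $e$ (which contains at least two elementary boundary gaps whenever $e$ is a diagonal).
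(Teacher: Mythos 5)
Your proposal is correct and follows essentially the same route as the paper: an exchange argument that replaces a diagonal and the sub-matching on its ``short'' side (the side guaranteed by the decreasing chords property) with boundary edges of no greater length, followed by the observation that only two all-boundary matchings exist and the better one is found in a linear scan. The paper phrases the chosen side as the one where all chords are at most $d(i,j)$ rather than explicitly as the shorter arc, but for points on a circle these coincide, so the arguments are the same.
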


\begin{proof}
    We show that there exists a \MinMaxM{} matching using only boundary edges. Suppose we have a \MinMaxM{} matching $M$ containing a diagonal $(i, j)$. Assume, without loss of generality, that all edges with endpoints in $\{i, \ldots, j\}$ are at most as long as $(i, j)$. Now we construct a new matching $M'$ by taking all matched pairs in $M$ that are outside of $\pij$ together with edges $(i,i+1), \allowbreak (i+2,i+3),\ldots,(j-1,j)$. The longest edge of $M'$ is not longer than the longest edge of $M$, proving our claim.

    There are only two different matchings made only of boundary edges and in $O(n)$ time we can choose the one with the shorter longest edge.    
\end{proof}

Points on a circle are in convex position, so, both \MinMinM{} and \MinMinB{} can be found in $O(n)$ time using \cref{thm:convexmmMandMMM,thm:convexmmBandMMB}. Instead, we can do it much simpler by finding the shortest feasible boundary edge. By \cref{lem:shortestIsBoundary}, the shortest edge of a matching is a boundary edge in both settings. This can be then extended to a perfect matching using \cref{lem:convexMatching}.

\paragraph{Remarks for this section.}

Results of this section rely only on the decreasing chords property, so they generalize to any point set with this property, e.g.~points on an ellipse of eccentricity at most $1/\sqrt{2}$ or points on a single branch of hyperbola of eccentricity at least $\sqrt{2}$.
Finally, we believe that a combination of the concept of orbits with \cref{lem:shortestIsBoundary} can result in $o(n^3)$-time algorithms for \MaxMinB{} as well.

\section{Doubly collinear points}\label{sec:collinear}

\newcommand{\numP}[1]{\ensuremath{|#1 \cap P|}}
\newcommand{\capP}[1]{\ensuremath{#1 \cap P}}
\newcommand{\edge}[2]{\ensuremath{(#1,#2)}}
\newcommand{\withX}[1]{\ensuremath{(#1_{\rightarrow x \rightarrow})}}
\newcommand{\withoutX}[1]{\ensuremath{{}_{x\cdots}(#1_{\rightarrow})}}
\newcommand{\withXclosed}[1]{\ensuremath{[#1_{\rightarrow x \rightarrow})}}
\newcommand{\withoutXclosed}[1]{\ensuremath{{}_{x\cdots}[#1_{\rightarrow})}}

A bichromatic point set $P$ is \emph{doubly collinear} if the blue points lie on a line $l_B$ and the red points lie on a line $l_R$.
We assume that $l_B$ and $l_R$ are not parallel and that the ordering of the points along each line is given.
Let $x = l_B \cap l_R$ and assume, for simplicity, that $x\notin P$. \Cref{fig:doublyCollinearSetup}a shows this setting.

Let $l\in\{l_B,l_R\}$. Then, for two points $a,b$ on $l$, we denote by $(a,b)$
the open line segment connecting $a$ and $b$. Further, if $a\neq x$, we denote by $\withX{a}, \allowbreak \withoutX{a} \subset l$ the open half-lines starting at $a$ that contain $x$ and do not contain $x$, respectively.
If we replace round brackets by square brackets, e.g.~in $\withoutXclosed{a}$, $(a,b]$, or $[a,b]$, the corresponding endpoint is contained in the set.

Both lines $l_B$ and $l_R$ are split at $x$ into two \emph{half-lines}.
The lines $l_B, l_R$ split the plane into four \emph{sectors}, each bounded by two of the half-lines.
We call a sector \emph{small}, if its angle is acute,
and \emph{big} otherwise.
Note that a point set on the boundary of a big sector has the decreasing chords property.

\begin{figure}[t]
  \centering
    \begin{minipage}{0.32\textwidth}
        \centering
        \includegraphics[width=\textwidth,page=1]{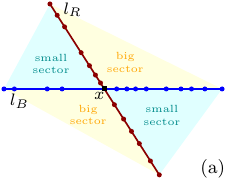}
    \end{minipage}
    \hfill
    \begin{minipage}{0.32\textwidth}
        \centering
        \includegraphics[width=\textwidth,page=2]{doubly_collinear_new}    
    \end{minipage}
    \hfill    
    \begin{minipage}{0.32\textwidth}
        \centering
        \includegraphics[width=\textwidth,page=4]{doubly_collinear_new}
    \end{minipage}
    \caption{(a) A doubly collinear set of points. (b) A feasible edge. (c) An infeasible edge.}
    \label{fig:doublyCollinearSetup}
\end{figure}

The following lemma gives us a feasibility criterion for an edge $\edge{r}{b}$, see Figs.~\ref{fig:doublyCollinearSetup}b and \ref{fig:doublyCollinearSetup}c, which can be checked in $O(1)$ time if the indices of the two points and the position of $x$ are known.
It also indicates an algorithm which, given a feasible edge $\edge{r}{b}$, returns a matching containing $\edge{r}{b}$ in $O(n)$ time.

\begin{lemma}\label{lem:dcFeasibility}
  An edge $\edge{r}{b}$ is feasible if and only if $\numP{(r,x)} \leq \numP{\withX{b}}$ and $\numP{(b,x)} \leq \numP{\withX{r}}$.
\end{lemma}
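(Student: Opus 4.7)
My plan is to prove both directions by examining the four sectors formed by $l_B$ and $l_R$. Every bichromatic edge connects a red point on $l_R$ to a blue point on $l_B$, so its relative interior lies in exactly one of the four sectors; consequently edges in different sectors cannot cross. The segment \edge{r}{b} occupies one specific sector $S$, and together with the segments $(r,x)$ and $(b,x)$ it bounds a triangle $T$ with vertices $r, x, b$ whose interior contains no points of $P$. Inside $S$, an edge avoids crossing \edge{r}{b} if and only if both its endpoints lie on the same side of \edge{r}{b}: either both on the boundary of $T$ (the near side), or both in the far region of $S$.

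For the ($\Rightarrow$) direction, let $M$ be a non-crossing matching containing \edge{r}{b}, pick any $r' \in \capP{(r,x)}$, and let $b'$ be its partner in $M$. The edge $(r', b')$ lies in one of the two sectors having $r'$ on their boundary. One of these is $S$ itself; non-crossing with \edge{r}{b} then forces $b' \in \capP{(b,x)}$ (the near side of $S$). The other sector is across $l_R$ from $S$, so $b'$ must lie on the half-line of $l_B$ not containing $b$---that is, in the part of $\withX{b}$ past $x$. In either case $b' \in \capP{\withX{b}}$, and since the matching is injective we obtain $\numP{(r,x)} \leq \numP{\withX{b}}$. The symmetric argument with the roles of red and blue interchanged yields $\numP{(b,x)} \leq \numP{\withX{r}}$.

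For the ($\Leftarrow$) direction, we build such a matching explicitly. Partition $P \setminus \{r, b\}$ into six zones: the red zones $R_1 := \capP{(r,x)}$, $R_2 := \capP{\withoutX{r}}$, and $R_3$ consisting of the red points on the half-line of $l_R$ from $x$ not containing $r$; analogously define $B_1, B_2, B_3$ for blue. The sector analysis above shows that the allowed zone pairings are exactly $(R_1, B_1), (R_1, B_3), (R_2, B_2), (R_2, B_3), (R_3, B_1), (R_3, B_2), (R_3, B_3)$, giving a balanced bipartite transportation problem with $|R_1| + |R_2| + |R_3| = n - 1 = |B_1| + |B_2| + |B_3|$. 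The nontrivial Hall-type conditions on this transport are $|R_1| \leq |B_1| + |B_3|$ and $|B_1| \leq |R_1| + |R_3|$, which are exactly the lemma's two hypotheses; the remaining subset conditions reduce to these via the balance identity (for instance $|R_2| \leq |B_2| + |B_3|$ rearranges to $|B_1| \leq |R_1| + |R_3|$). Hence a feasible transport flow exists.

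To turn the flow into a matching, assemble sector by sector: within each sector, the red and blue points assigned to it lie on its two bounding half-lines, and we pair them by the ladder scheme (sort each color by distance from $x$ along its half-line and match by rank), producing a non-crossing bipartite matching. Since edges in different sectors have disjoint interiors, taking their union with \edge{r}{b} produces the desired perfect non-crossing bichromatic matching. The main obstacle is the Hall verification for the transportation step; this is a short case-check in which only the lemma's two inequalities are binding, with the remaining conditions following algebraically from global balance. Once done, the ladder construction yields in $O(n)$ time the extension of a feasible edge to a full matching promised after the lemma.
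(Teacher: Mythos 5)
Your proof is correct, and while your forward direction coincides with the paper's (a point of $P$ strictly between $r$ and $x$ on $l_R$ cannot be matched into the part of $l_B$ beyond $b$ without crossing $(r,b)$, so it must go to the half-line of $l_B$ starting at $b$ and containing $x$), your backward direction takes a genuinely different route. The paper builds the matching directly: it places $(r,b)$, balances the two ``inner'' blocks $(r,x)\cap P$ and $(b,x)\cap P$ against each other --- extending the deficient side past $x$ by exactly as many points as needed, which the two inequalities permit --- matches these blocks, and then observes that an arbitrary non-crossing perfect matching of the leftover points cannot cross the edges already placed. You instead reduce feasibility to a $3\times 3$ transportation problem on the zones cut out by $r$, $b$, and $x$, check Hall's condition (correctly noting that only the constraints for $R_1$ and $R_2$ are non-trivial and that the latter is, by the balance identity, exactly the lemma's second inequality), and realize the flow geometrically by per-sector ladder matchings, using that edges in distinct sectors are automatically disjoint. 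Both arguments are sound; the paper's is shorter and directly yields the $O(n)$ extension algorithm claimed before the lemma, while yours makes transparent that the two inequalities are precisely Hall conditions, which is conceptually cleaner and would generalize more easily. One detail worth stating explicitly in your assembly step: in the sector containing $(r,b)$, the number of assigned points from $(r,x)\cap P$ equals the number of assigned points from $(b,x)\cap P$ (both equal the flow on the $(R_1,B_1)$ arc), which is what guarantees that the ladder matching in that sector reproduces the edge $(r,b)$ and respects the zone assignment rather than accidentally pairing an inner red point with an outer blue one.
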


\begin{proof}
  For the only-if-part, note that, if $r$ and $b$ are matched, then the points in $\capP{(r,x)}$ cannot be matched with the points in
  $\capP{\withoutX{b}}$ since this would yield a crossing with $\edge{r}{b}$. Therefore, the points in $\capP{(r,x)}$ have to be matched with the points in
  $\capP{\withX{b}}$ and, hence, $\numP{(r,x)} \leq \numP{\withX{b}}$. The inequality $\numP{(b,x)} \leq \numP{\withX{r}}$ follows by exchanging the
  colors red and blue in the argument.
  
  For the if-part, we will show that if the two inequalities are satisfied, then we can construct a matching $M$ containing the
  edge $\edge{r}{b}$. We start by adding $\edge{r}{b}$ to $M$. Then we distinguish three cases. If $\numP{(r,x)} = \numP{(b,x)}$, we add the unique non-crossing
  matching of $\capP{(r,x)}$ and $\capP{(b,x)}$ to $M$. Then no edge between two unmatched points intersects an edge of $M$. Hence, we
  can add an arbitrary non-crossing matching of the remaining points to $M$. If $\numP{(r,x)} < \numP{(b,x)}$, let $r'\in \capP{\withX{r}}$ be the
  point such that $\numP{(r,r']} = \numP{(b,x)}$, see \cref{fig:doublyCollinearSetup}b. This point exists since $\numP{(b,x)} \leq \numP{\withX{r}}$.
  We add an arbitrary
  non-crossing matching of $\capP{(r,r']}$ and $\capP{(b,x)}$ to $M$. Then, as in the previous case, no possible edge between the unmatched points
  intersects an edge of $M$ and, hence, we can add an arbitrary non-crossing matching of the remaining points to $M$.
  If  $\numP{(r,x)} > \numP{(b,x)}$, the construction is symmetric to the second case.
\end{proof}

\subsection{\MinMinB{} and \MaxMaxB{} matchings in doubly collinear point sets}

Let $l_R'$ and $l_B'$ be a red and a blue half-line, respectively.
The following lemma is a direct consequence of \cref{lem:dcFeasibility} and allows us to find for each point in $\capP{l_R'}$ the closest point in $\capP{l_B'}$ it induces a feasible edge with
in $O(n)$ total time.

\begin{lemma}\label{lem:dcCanditatesInLinear}
  Let $r\in R$ and $r' \in \capP{\withoutX{r}}$. Let $b,b' \in \capP{l_B'}$ be closest to $r$ and $r'$, respectively, such that $\edge{r}{b}$ and $\edge{r'}{b'}$ are feasible.
  Then $b'\in \capP{\withoutXclosed{b}}$.
\end{lemma}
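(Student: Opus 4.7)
The plan is to argue by contradiction, using an exchange/uncrossing argument. Suppose that $b' \notin \capP{\withoutXclosed{b}}$. Since $b, b' \in \capP{l_B'}$, this forces $b'$ to lie strictly between $x$ and $b$ on $l_B'$, so in particular $b \neq b'$; similarly, $r' \in \capP{\withoutX{r}}$ places $r$ strictly between $x$ and $r'$ on $l_R$.

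The first step is to verify that the two swapped edges $\edge{r}{b'}$ and $\edge{r'}{b}$ are both feasible. For $\edge{r}{b'}$, the inclusion $r \in (r',x)$ yields $\numP{(r,x)} < \numP{(r',x)} \leq \numP{\withX{b'}}$, where the second inequality comes from the feasibility of $\edge{r'}{b'}$; and the inclusion $b' \in (b,x)$ yields $\numP{(b',x)} < \numP{(b,x)} \leq \numP{\withX{r}}$, where the second inequality comes from the feasibility of $\edge{r}{b}$. Both conditions of \cref{lem:dcFeasibility} are therefore met. The argument for $\edge{r'}{b}$ is symmetric, combining the chains $\numP{(b,x)} \leq \numP{\withX{r}} < \numP{\withX{r'}}$ and $\numP{(r',x)} \leq \numP{\withX{b'}} < \numP{\withX{b}}$ extracted from the same two feasibility hypotheses.

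The second step is the geometric uncrossing. Because $r$ is strictly closer to $x$ than $r'$ on the half-line of $l_R$ that they share, while $b'$ is strictly closer to $x$ than $b$ on $l_B'$, the segments $\edge{r}{b}$ and $\edge{r'}{b'}$ cross transversally at some interior point $p$. Since $l_R \neq l_B$ and $b \neq b'$, the point $b'$ does not lie on the line through $r$ and $b$, and $b$ does not lie on the line through $r'$ and $b'$, so both triangle inequalities at $p$ are strict:
\[
    d(r,b') + d(r',b) \;<\; \bigl(d(r,p)+d(p,b')\bigr) + \bigl(d(r',p)+d(p,b)\bigr) \;=\; d(r,b) + d(r',b').
\]
On the other hand, the minimality of $b$ and $b'$, combined with the feasibility of $\edge{r}{b'}$ and $\edge{r'}{b}$ just established, gives $d(r,b) \leq d(r,b')$ and $d(r',b') \leq d(r',b)$, whose sum contradicts the displayed inequality.

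The main obstacle is the feasibility-preservation step: the four inequalities of \cref{lem:dcFeasibility} must be balanced carefully against the strict count-differences produced by $r$ lying strictly between $x$ and $r'$, and $b'$ strictly between $x$ and $b$. Once the swapped edges are known to be feasible, the remainder is the standard planar uncrossing argument via the triangle inequality.
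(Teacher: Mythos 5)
Your proof is correct and follows essentially the same route as the paper's: assume $b'$ lies strictly between $x$ and $b$, use \cref{lem:dcFeasibility} together with the strict count inequalities to show the swapped edges $\edge{r}{b'}$ and $\edge{r'}{b}$ are feasible, and then derive a contradiction with the minimality of $b$ and $b'$ from the crossing inequality $d(r,b') + d(r',b) < d(r,b) + d(r',b')$. The only difference is that you spell out the strictness of the triangle inequalities and the second feasibility chain, which the paper dispatches with ``analogously.''
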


\begin{proof}
  Suppose that $b' \in \capP{\withX{b}}$. Since $r'\in \capP{\withoutX{r}}$, we have $\numP{(r,x)} < \numP{(r',x)}$. Since $\edge{r'}{b'}$ is a feasible edge,
  we have $\numP{(r',x)} \leq \numP{\withX{b'}}$ by \cref{lem:dcFeasibility}. Together, this gives $\numP{(r,x)} < \numP{\withX{b'}}$.
  Similarly, we obtain $\numP{(b',x)} < \numP{\withX{r}}$. Therefore, due to \cref{lem:dcFeasibility}, $\edge{r}{b'}$ is a feasible edge.
  Analogously, we can show that $\edge{r'}{b}$ is a feasible edge.
  
  Since $\edge{r}{b}$ and $\edge{r'}{b'}$ are crossing edges, we have $d(r,b') + d(r',b) < d(r,b) + d(r',b')$. Hence, $d(r,b') < d(r,b)$
  or $d(r',b) < d(r',b')$. Since $\edge{r}{b'}$ and $\edge{r'}{b}$ are feasible edges, the former is a contradiction to the minimality of $\edge{r}{b}$
  and the latter is a contradiction to the minimality of $\edge{r'}{b'}$.
\end{proof}

\begin{theorem}\label{thm:doublymmBNew}
  If $P$ is doubly collinear, \MinMinB{} can be solved in $O(n)$ time.
\end{theorem}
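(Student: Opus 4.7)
The plan is to exploit the monotonicity established in \cref{lem:dcCanditatesInLinear} via a two-pointer sweep. The shortest feasible edge has its red endpoint on one of the two red half-lines emanating from $x$ and its blue endpoint on one of the two blue half-lines, so it suffices to solve four subproblems, one for each pair of half-lines $(l_R',l_B')$, and take the minimum.

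For a fixed pair, I would order the red points on $l_R'$ by increasing distance from $x$ as $r_1,\dotsc,r_p$ and similarly the blue points on $l_B'$ as $b_1,\dotsc,b_q$. Let $\sigma(i)$ denote the index of the closest blue in $\capP{l_B'}$ forming a feasible edge with $r_i$. Applying \cref{lem:dcCanditatesInLinear} to consecutive reds yields $\sigma(i)\leq\sigma(i+1)$, so $\sigma$ is non-decreasing, which justifies a forward-only pointer. I would maintain a single pointer $j$, and for each $i$ from $1$ to $p$ first advance $j$ while $\edge{r_i}{b_j}$ violates the first feasibility condition from \cref{lem:dcFeasibility}, then continue advancing while $\edge{r_i}{b_{j+1}}$ is feasible and $d(r_i,b_{j+1})<d(r_i,b_j)$, and finally record $d(r_i,b_j)$ as the candidate distance for $r_i$.

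To argue that this greedy correctly locates $\sigma(i)$, I would use the structure of the feasibility conditions: for fixed $r_i$, as $b$ moves away from $x$ along $l_B'$, the first inequality $\numP{(r_i,x)}\leq\numP{\withX{b}}$ weakens while the second $\numP{(b,x)}\leq\numP{\withX{r_i}}$ strengthens, so the feasible blue indices form a contiguous interval $[\alpha(i),\beta(i)]$. Within this interval, $d(r_i,\cdot)$ restricted to $l_B'$ is unimodal, so the greedy stopping criterion pinpoints the minimum, and \cref{lem:dcCanditatesInLinear} guarantees that $\sigma(i-1)$ never lies past $\sigma(i)$, so no backtracking is required. Each subproblem thus runs in $O(p+q)$ time, the four subproblems give $O(n)$ overall, and the construction in the proof of \cref{lem:dcFeasibility} extends the resulting shortest feasible edge to a perfect matching in $O(n)$ further time. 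The main obstacle is tying these three ingredients---the interval structure of feasibility, the unimodality of the Euclidean distance, and the pointer monotonicity---together into a single invariant that certifies the greedy sweep in all edge cases, in particular when for some $i$ the feasibility interval is empty and $r_i$ must be skipped without losing the non-backtracking guarantee for subsequent iterations.
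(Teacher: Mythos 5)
Your proposal is correct and takes essentially the same approach as the paper: the paper's proof likewise reduces to the four pairs of half-lines and invokes \cref{lem:dcCanditatesInLinear} to find all closest feasible partners in $O(n)$ total time, leaving the details of the monotone sweep (which you spell out) implicit. The edge case you flag resolves cleanly, because the lower endpoint $\alpha(i)$ of the feasibility interval is itself non-decreasing in $i$ by \cref{lem:dcFeasibility}, so after skipping a red point with an empty interval the pointer still sits at or before every later $\sigma$ value.
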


\begin{proof}
  Using \cref{lem:dcCanditatesInLinear}, we can find for each pair $l_R',l_B'$ of half-lines of $l_R$ and $l_B$, respectively,
  the closest points of all points of $\capP{l_R'}$ in $\capP{l_B'}$ and the closest points of all points
  of $\capP{l_B'}$ in $\capP{l_R'}$ they induce a feasible edge with in $O(n)$ time. This gives a linear number of candidate edges for the shortest feasible edge
  and then the minimum can be computed in $O(n)$ time.
\end{proof}

With the following lemma we can find a longest feasible edge in $O(1)$ time. We call a point $p\in P$ an \emph{extremal} point
if $\numP{\withoutX{p}} = 0$.
    
\begin{lemma}\label{lem:dcLongest}
  The longest edge between points in $R$ and $B$ is realized by a pair of extremal points.
\end{lemma}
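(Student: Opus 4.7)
The plan is to reduce the lemma to the convexity of Euclidean distance along a line. For any fixed point $p$ off a line $\ell$ and any varying point $q \in \ell$, one has $d(p,q) = \sqrt{(\lambda - \lambda_0)^2 + h^2}$ in the arc-length coordinate $\lambda$ of $q$ along $\ell$, where $\lambda_0$ is the coordinate of the foot of perpendicular from $p$ onto $\ell$ and $h$ is the distance from $p$ to $\ell$. As a function of $\lambda$ this is convex, so its maximum over any finite subset of $\ell$ is attained at one of the two endpoints of the range of that subset along $\ell$.

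I would apply this observation twice. First, fix $b \in B$ and maximize $d(r,b)$ over the finite collinear set $R \subset l_R$: by the convexity argument the maximizer is an endpoint of the range of $R$ along $l_R$. In the generic situation where $R$ contains red points on both half-lines of $l_R$, these two range endpoints are precisely the two red points farthest from $x$ on each half-line, which is the formal definition of an extremal point in $R$. A symmetric argument with $r$ now fixed and $b$ varying in $B$ identifies an extremal blue. Combining, $\max_{r \in R,\, b \in B} d(r,b)$ is realized by a pair of extremal points, which proves the lemma in this case.

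The principal subtlety, and where I expect the bulk of the work, is the degenerate case in which all points of one color lie on a single half-line of its line. There the formal definition of extremal supplies only one point of that color (the one farthest from $x$), while the range along the line still has two endpoints, the other being the point of that color closest to $x$. To reconcile this with the lemma I would argue via an exchange-type step based on the feasibility criterion of \cref{lem:dcFeasibility}: starting from a longest edge whose endpoint lies at the inner range-endpoint of a color, use the constraints $\numP{(r,x)} \leq \numP{\withX{b}}$ and $\numP{(b,x)} \leq \numP{\withX{r}}$ on the point counts on each side of an edge to produce an edge of at least the same length whose endpoint on that color is extremal. Verifying that such an exchange is always available, and extending it simultaneously on the red and blue sides, is the delicate point of the proof.
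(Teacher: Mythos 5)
Your generic case is exactly the paper's argument: the paper expresses the convexity of $\lambda\mapsto\sqrt{(\lambda-\lambda_0)^2+h^2}$ as the statement that $d(r,\cdot)$ is monotone increasing in one of the two directions along $l_B$ from $b$, and then replaces $b$ by the next point of $B$ in that direction. When each colour has points on both half-lines of its line, the two endpoints of the range of $B$ along $l_B$ are precisely the two extremal blue points, and the lemma follows as you describe; up to there the two proofs coincide.

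The degenerate case you flag is, however, a genuine gap in your proposal, and the repair you sketch cannot succeed for the statement as written. Take $l_B$ to be the $x$-axis and $l_R$ the line through $x=(0,0)$ at angle $\pi/6$, with $B=\{b_1=(1,0),\,b_2=(10,0)\}$ and $R=\{r_1,r_2\}$ at distances $1$ and $20$ from $x$ along the ray at angle $\pi/6$. The only extremal points are $b_2$ and $r_2\approx(17.32,10)$, and $d(r_2,b_2)\approx 12.4$, whereas $d(r_2,b_1)\approx 19.1$: the foot of the perpendicular from $r_2$ onto $l_B$ lies beyond $b_2$, so the \emph{inner} range endpoint $b_1$ is the blue point farthest from $r_2$. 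Hence the overall longest edge between $R$ and $B$ is strictly longer than every edge joining two extremal points, and no exchange of the kind you propose can produce an extremal pair of at least the same length. (Your instinct here is sound: the paper's own one-line proof tacitly assumes that a next point of $B$ exists in the direction of increase, which is exactly what fails in this configuration; the edge $(r_2,b_1)$ is infeasible by \cref{lem:dcFeasibility}, so the version of the lemma that \cref{thm:doublyMMB} actually relies on concerns the longest \emph{feasible} edge.) To close the gap you would have to prove that restricted claim, using the two counting inequalities of \cref{lem:dcFeasibility} to show that a longest feasible edge can always be pushed to extremal endpoints; your proposal names this as "the delicate point" but supplies no argument for it, and as stated for unrestricted longest edges it is unprovable.
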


\begin{proof}
  Let $\edge{r}{b}$ be a longest edge between $R$ and $B$ and assume that one of them, say $b$, is not an extremal point.
  The distance function between $r$ and $l_B$ is monotone increasing in on of the two direction from $b$.
  Let $b'$ be the next point of $B$ on $l_B$ after $b$ in this direction. Then $d(r,b') > d(r,b)$, in contradiction to the
  assumption that $\edge{r}{b}$ is a longest edge.
\end{proof}
    
\begin{theorem}\label{thm:doublyMMB}
 If $P$ is doubly collinear, \MaxMaxB{} can be solved in $O(1)$ time.
\end{theorem}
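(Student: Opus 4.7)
My plan is to combine the structural result of Lemma~\ref{lem:dcLongest} with the feasibility criterion of Lemma~\ref{lem:dcFeasibility} to reduce \MaxMaxB{} to comparing a constant number of candidate edges. The value sought is the length of a longest feasible bichromatic edge. By Lemma~\ref{lem:dcLongest}, the overall longest bichromatic pair (regardless of feasibility) is realized by a pair of extremal points. So if I can show that every pair consisting of one extremal red point and one extremal blue point is automatically feasible, then the longest feasible bichromatic edge coincides with the longest among these extremal pairs.

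The key step is to verify that feasibility comes for free at extremal pairs. Fix an extremal red $r^*$ and an extremal blue $b^*$. By extremality, $\capP{\withoutX{b^*}} = \emptyset$, so all $n-1$ remaining blue points lie in $\withX{b^*}$; combined with the facts that $|B|=n$ (balanced assumption) and that no red point lies on $l_B$ (since $l_B \cap l_R = \{x\} \not\subset P$), this gives $\numP{\withX{b^*}} = n-1$. On the other hand, $(r^*, x) \subset l_R$ contains only red points and excludes $r^*$, so $\numP{(r^*,x)} \leq n-1 = \numP{\withX{b^*}}$. The symmetric inequality $\numP{(b^*,x)} \leq \numP{\withX{r^*}}$ follows by swapping the roles of the two colors. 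Lemma~\ref{lem:dcFeasibility} then certifies that $\edge{r^*}{b^*}$ is feasible.

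It remains to count extremal points and describe the algorithm. On each of $l_B$ and $l_R$, the point $x$ induces two half-lines, and each half-line carries at most one extremal point of its color (the farthest point of that color from $x$ on that half-line, if any exists). This gives at most two extremal points per color, and hence at most four candidate pairs. Since the orderings of $B$ and $R$ along their respective lines are part of the input, the extremal points are precisely those endpoints of these orderings that lie on a half-line containing no further point of the same color; identifying them takes $O(1)$ time via coordinate comparisons with $x$. Computing the up to four Euclidean distances and returning the maximum is then $O(1)$.

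The main subtlety I anticipate is the half-line bookkeeping in the feasibility argument, in particular handling uniformly the degenerate case where $x$ lies beyond all points of one color (so that this color contributes only a single extremal point) versus the generic case where $x$ separates the points on both lines. In both cases the inequality $\numP{(r^*,x)} \leq n-1 = \numP{\withX{b^*}}$ is preserved, so the same argument goes through.
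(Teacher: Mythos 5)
Your proof is correct and follows essentially the same route as the paper: \cref{lem:dcLongest} reduces the problem to at most four extremal candidate pairs, whose maximum is then taken in $O(1)$ time. The only (harmless) difference is how feasibility of these pairs is certified --- the paper simply observes that they lie on the boundary of the convex hull of $P$ and are therefore feasible, whereas you verify the two inequalities of \cref{lem:dcFeasibility} directly by counting; both arguments are valid.
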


\begin{proof}
  Using \cref{lem:dcLongest}, we obtain at most four candidates for the longest edge.
  Since all of them are in the boundary of the convex hull of $P$, they are all feasible.
  Thus, we only have to find their maximum in $O(1)$ time.
\end{proof}

\subsection{\MinMaxB{} and \MaxMinB{} matchings in doubly collinear point sets}

\paragraph{One-sided case.}
Here we consider the \emph{one-sided} doubly collinear case,
where all red points are on the same side of $l_B$, see \cref{fig:oneSidedDoublyCollinear}. Since in this case the extremal red point must be matched with one of the two extremal blue points, all four optimization variants can be solved in $O(n^2)$ time by dynamic programming.

\begin{theorem}\label{thm:osdcGeneralDP}
  If $P$ is one-sided doubly collinear, \MinMaxB{} and \MaxMinB{} can be solved in $O(n^2)$ time.
\end{theorem}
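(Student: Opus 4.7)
The plan is a straightforward dynamic programming that exploits the structural property already announced before the theorem statement: the extremal red point must be matched with one of the two extremal blue points. I would first verify that property by a short application of \cref{lem:dcFeasibility}, then argue that matching these points introduces no crossing constraints on the remaining edges, and finally set up and analyze the resulting two-parameter recurrence.

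For the first step, I would label the red points $r_1,\dotsc,r_n$ along the unique non-empty red half-line in order of increasing distance from $x$, and the blue points on the two blue half-lines as $b_1,\dotsc,b_k$ and $b'_1,\dotsc,b'_m$ with $k+m=n$, also outward from $x$. The only extremal red point is $r_n$, since only it satisfies $\numP{\withoutX{r_n}}=0$. For any candidate blue partner $b$, \cref{lem:dcFeasibility} demands
\[
  n-1 \;=\; \numP{(r_n,x)} \;\le\; \numP{\withX{b}},
\]
which is satisfied exactly by $b\in\{b_k,b'_m\}$; the other inequality $\numP{(b,x)}\le\numP{\withX{r_n}}=2n-1$ is automatic.

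For the second step, I would observe that matching $r_n$ with, say, $b_k$ leaves a smaller one-sided doubly collinear instance whose extremal points are $r_{n-1}$, $b_{k-1}$, $b'_m$. Placing $x$ at the origin with $l_B$ as the $x$-axis and $l_R$ directed into the upper half-plane, a one-line sign computation on the linear form defining the line through $r_n$ and $b_k$ shows that every remaining red point and every remaining blue point lies strictly on a single side of this line, so no edge of the residual matching can cross $(r_n,b_k)$; the case of matching to $b'_m$ is symmetric. Hence the residual problem is again of the same form and can be solved recursively.

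Finally, I would define $M[j,j']$ for $0\le j\le k$ and $0\le j'\le m$ as the optimum over perfect non-crossing matchings of the subinstance consisting of the innermost $j+j'$ red points together with $\{b_1,\dotsc,b_j\}$ and $\{b'_1,\dotsc,b'_{j'}\}$. The recurrence branches on the partner of the extremal red $r_{j+j'}$:
\[
  M[j,j'] \;=\; \underset{\mA}{\mathrm{opt}} \Bigl( \mB\{\,d(r_{j+j'},b_j),\,M[j-1,j']\,\},\ \mB\{\,d(r_{j+j'},b'_{j'}),\,M[j,j'-1]\,\} \Bigr),
\]
with $(\mA,\mB)=(\min,\max)$ for \MinMaxB{} and $(\mA,\mB)=(\max,\min)$ for \MaxMinB{}; branches with $j=0$ or $j'=0$ are dropped, and $M[0,0]$ is the neutral element of $\mB$. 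Filling the $\Theta(n^2)$ states in order of increasing $j+j'$ costs $O(1)$ each, yielding the claimed $O(n^2)$ bound; back-pointers retrieve an optimal matching in the same time. The main obstacle is the non-crossing argument; once the coordinate sign check is in place, everything else is routine bookkeeping.
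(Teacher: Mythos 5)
Your proof is correct and takes essentially the same approach as the paper: the same innermost-first labeling of the points on the three half-lines, the same two-parameter table indexed by how many blue points of each blue half-line are used, and the same recurrence driven by the observation that the outermost red point must be matched to one of the two outermost blue points. The only blemish is your count of $2n-1$ for the number of points on the half-line of $l_R$ that starts at $r_n$ and contains $x$; it should be $n-1$, since that half-line is a subset of $l_R$ and hence contains only red points, though the second feasibility inequality remains automatic either way.
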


\begin{proof}
  Let $B^-$ and $B^+$ be the sets of blue points on the two half-lines of $l_B$. We label the points of $B^-$ as $b^-_1, \ldots, b^-_{|B^-|}$, the points of $B^+$ as $b^+_1, \ldots, b^+_{|B^+|}$, and the red points as $r_1, \ldots r_n$, all three enumerations starting from the point nearest to $x$ and going away from it, see \cref{fig:oneSidedDoublyCollinear}a.

  \begin{figure}
    \centering
    \begin{minipage}{0.49\textwidth}
        \centering
        \includegraphics[width=0.9\textwidth,page=1]{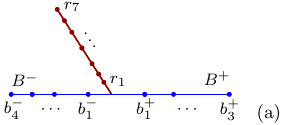}
    \end{minipage}
    \hfill
    \begin{minipage}{0.49\textwidth}
        \centering
        \includegraphics[width=0.9\textwidth,page=2]{one_sided_doubly_collinear}  
    \end{minipage}
    \caption{(a) A one-sided doubly collinear point set. (b) An example of a matching $M_k$.}
    \label{fig:oneSidedDoublyCollinear}
  \end{figure}

  We present a dynamic programming algorithm which solves all optimization problems, \MinMinB{}, \MinMaxB{}, \MaxMinB{}, and \MaxMaxB{}, for the one-sided case. Let $\mA, \mB \in \{\min, \max\}$ be the two optimization functions we use, e.g., if $\mA=\min$ and $\mB=\max$, then we are dealing with the \MinMaxB{} problem.

  Let $S_{n^-,n^+}$ be the optimal solution to the $\mA\mB$ problem where we are restricted to the first $n^-$ points of $B^-$, the first $n^+$ points of $B^+$, and the first $n^-+n^+$ red points. The red point farthest from $x$ must be connected to one of the two extremal blue points, which gives us the recurrence
  \[ S_{n^-,n^+} = m_1 \{m_2\{d(b^-_{n^-},r_{n^-+n^+}), S_{n^--1,n^+}\}, m_2\{d(b^+_{n^+},r_{n^-+n^+}), S_{n^-,n^+-1}\}\} \enspace . \]

  Using this formula we can compute the solutions to all subproblems in $O(n^2)$ time. Constructing an optimal matching is easily done in $O(n)$ time once we have all $S_{n^-,n^+}$ values.
\end{proof}

To construct a faster algorithm for \MinMaxB{}, we notice that there exists (also in the two-sided case) an optimal matching of a special form. It can be obtained from an arbitrary optimal matching by applying local changes that do not change the objective value.

\begin{lemma}\label{lem:minMaxOptMatchingStructure}
  There exists an optimal matching for \MinMaxB{} of the following form.
  For each half-line $l'$, the points of $\capP{l'}$ that are matched in the small incident sector are consecutive points, see \cref{fig:dcSpecialMatchingExamples}c.
\end{lemma}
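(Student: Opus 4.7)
The plan is to prove the lemma by a local exchange argument. Start with an arbitrary optimal \MinMaxB{} matching $M^*$. If $M^*$ already has the claimed form, we are done; otherwise, some half-line $l'$ contains three consecutive points $p, q, s$ on $l'$ (in the order of increasing distance from $x$) such that $p$ and $s$ are matched in the small sector adjacent to $l'$, while $q$ is matched in the big sector. Denote their respective partners by $p', q', s'$.

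I would consider two local swaps. \emph{Swap A} exchanges the partners of $p$ and $q$, producing edges $(p, q')$ in the big sector and $(q, p')$ in the small sector; \emph{swap B} exchanges the partners of $q$ and $s$, producing edges $(q, s')$ in the small sector and $(s, q')$ in the big sector. Each swap removes the specific violating pattern around $q$. The first step is to verify that both swaps produce non-crossing matchings: within each sector, the matching between the points on the two bounding half-lines is monotone in distance from $x$ (this is equivalent to non-crossing for two points on two rays sharing an endpoint), and since in either swap we only shift a single blue endpoint by one position on $l'$ while reassigning its matched red partner, monotonicity is preserved in both sectors and no crossings are introduced.

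The main step is to show that at least one of the two swaps does not increase the maximum edge length of $M^*$. Two geometric facts drive the argument. First, because $q'$ lies on the red half-line bounding the big sector, which is separated from $l'$ by $x$, the squared distance $d(t, q')^2 = d_t^2 + e_{q'}^2 + 2 d_t e_{q'} \cos \alpha$ (where $\alpha$ is the acute angle between $l_B$ and $l_R$) is strictly increasing in the distance $d_t$ of the point $t \in l'$ from $x$. Hence $d(p, q') < d(q, q') < d(s, q')$, so swap A shortens the big-sector edge whereas swap B lengthens it. Second, the distance from a point on $l'$ to a fixed red point $r$ is convex in the position along $l'$; applied to $q$, which lies between $p$ and $s$ on $l'$, this gives $d(q, p') \leq \max(d(p, p'), d(s, p'))$ and $d(q, s') \leq \max(d(p, s'), d(s, s'))$. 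Comparing $d(s, p')$ with $d(s, s')$ and $d(p, s')$ with $d(p, p')$ via the law of cosines, I would show that whenever swap A would introduce a small-sector edge longer than the old maximum, the analogous comparison for swap B is favorable, and vice versa. This dichotomy is governed by whether the orthogonal projection of the relevant red partner onto $l'$ lies inside or past the segment $ps$.

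Iterating the applicable swap strictly reduces the total number of violating triples across all half-lines, so the process terminates in an optimal matching with the claimed structure. I expect the main obstacle to be the max-preservation dichotomy: making the case analysis precise requires combining the law of cosines with the acuteness of the small sector, and carefully verifying that the cases in which swap A would fail are precisely those in which swap B succeeds.
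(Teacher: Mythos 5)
Your overall strategy (a local exchange that terminates in a matching of the desired form) is the same as the paper's, but two of your specific steps have genuine gaps. First, the failure of the desired property does not guarantee three \emph{consecutive} points $p,q,s$ with the pattern small--big--small: the pattern along a half-line could be small, big, big, small, in which case no such triple exists. The paper instead isolates a consecutive pair $r_1$ (matched small), $r_2$ (matched big), with $r_1$ closer to $x$, together with a possibly non-adjacent farther point $r$ matched small; such a configuration always exists whenever the property fails, and only the adjacent pair is swapped, so non-crossing is preserved. Second, and more seriously, your swap B cannot work as stated: by your own monotonicity computation the new big-sector edge satisfies $d(s,q') > d(q,q')$, and you offer no edge of the original matching that dominates $d(s,q')$ (for small $\alpha$ and $s,q'$ far from $x$ this edge can exceed every edge of $M^*$). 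So half of your dichotomy can increase the maximum, and the whole burden falls on swap A. But for swap A alone, your bound $d(q,p')\leq\max(d(p,p'),d(s,p'))$ involves $d(s,p')$, which is also not an edge of $M^*$, and $d(s,p')\leq d(s,s')$ can fail when the projection of $s$ onto the other half-line lies beyond $s'$. The argument is therefore open at exactly the point you flag as the main obstacle.

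The missing idea is how the paper bounds the new small-sector edge of the single swap (your swap A). Writing $r_1=p$, $r_2=q$, $b_1=p'$, $b_2=q'$, and letting $(r,b)$ be the farther small-sector edge witnessing the violation (in your notation $r=s$, $b=s'$): if the angle $b_2b_1r_2$ is obtuse then $d(r_2,b_1)<d(r_2,b_2)$; if the angle $b_1r_2r_1$ is obtuse then $d(r_2,b_1)<d(r_1,b_1)$; and if both are acute or right then $d(r_2,b_1)<d(r,b)$. It is this third case --- comparing the new edge directly against a third, farther edge of the matching rather than routing through quantities like $d(s,p')$ that are not edges of $M^*$ --- that your proposal lacks, and without it (or a correct replacement) the exchange argument does not close.
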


\begin{proof}
  Let $M$ be an optimal \MinMaxB{} matching.
  Let $l_R'$ be one of the red half-lines. Let $r$ be a point in $\capP{l_R'}$, let~$b$ be the point $r$ is matched with, and assume that $\edge{r}{b}$ lies in a small sector.
  Further, let $r_1,r_2 \neq r$ be two consecutive points in $(x,r)$ with $r_1$ closer to $x$ than $r_2$.
  Let $b_1,b_2$ be the points that $r_1,r_2$ are matched with in $M$, respectively, and assume that $\edge{r_1}{b_1}$ lies in a small sector and $\edge{r_2}{b_2}$ in a big sector.
  Since $r_1$ and $r_2$ are consecutive points, replacing $\edge{r_1}{b_1},\edge{r_2}{b_2}$ in $M$ by $\edge{r_1}{b_2},\edge{r_2}{b_1}$ does not produce any crossing edges.
  We will show that this replacement also does not increase the length of a longest edge of the matching.
  Because of the decreasing chords property of the big sector, $d(r_1,b_2) < d(r_2,b_2)$.
  For the edge $\edge{r_2}{b_1}$ we distinguish three cases, see \cref{fig:dcLocalChangeMinMax}.
  (i) If the angle $b_2b_1r_2$ is obtuse, then $d(r_2,b_1) < d(r_2,b_2)$.
  (ii) If the angle $b_1r_2r_1$ is obtuse, then $d(r_2,b_1) < d(r_1,b_1)$.
  (iii) If both angles are acute or right, then $d(r_2,b_1) < d(r,b)$.
  
  \begin{figure}[H]
    \centering
    \begin{minipage}{0.32\textwidth}
        \centering
        \includegraphics[width=\textwidth,page=1]{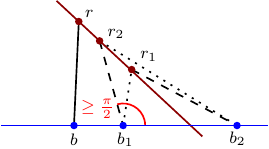}
    \end{minipage}
    \hfill
    \begin{minipage}{0.32\textwidth}
        \centering
        \includegraphics[width=\textwidth,page=2]{local_change_minmax_dc.pdf}  
    \end{minipage}
    \hfill    
    \begin{minipage}{0.32\textwidth}
        \centering
        \includegraphics[width=\textwidth,page=3]{local_change_minmax_dc.pdf}
    \end{minipage}
    \caption{The three cases of the proof of \cref{lem:minMaxOptMatchingStructure}.}
    \label{fig:dcLocalChangeMinMax}
  \end{figure}
  
  By repeatedly applying these swaps we can transform $M$ into an optimal matching that fulfills the desired property on the half-line $l_R'$. If in $M$ the property was already fulfilled on another half-line, it is still fulfilled there in the new matching. Hence, repeating this swapping process on the other half-lines yields an optimal matching of the desired form.
\end{proof}

\begin{figure}[t]
  \centering
  \begin{minipage}{0.32\textwidth}
    \centering
    \includegraphics[width=0.9\textwidth,page=3]{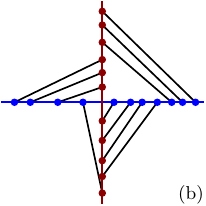}
  \end{minipage}
  \hfill
  \begin{minipage}{0.32\textwidth}
    \centering
    \includegraphics[width=0.9\textwidth,page=1]{doubly_collinear_special_examples} 
  \end{minipage}
  \hfill    
  \begin{minipage}{0.32\textwidth}
    \centering
    \includegraphics[width=0.9\textwidth,page=2]{doubly_collinear_special_examples}
  \end{minipage}
  \caption{Special types of optimal matchings (a)~ for \MinMaxB{},
    (b)~ for \MinMaxB{} and \MaxMinB{} if $\alpha=\frac{\pi}{2}$, and
    (c)~ for \MinMaxB{} if $\alpha < \frac{\pi}{4}$.}
  \label{fig:dcSpecialMatchingExamples}
\end{figure}

\begin{theorem}\label{thm:osdcMinMax}
  If $P$ is \emph{one-sided doubly collinear}, \MinMaxB{} can be solved in $O(n \log n)$ time.
\end{theorem}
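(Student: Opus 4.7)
My plan is to use Lemma~\ref{lem:minMaxOptMatchingStructure} to reduce the search to a one-parameter family of matchings, construct a linear-time decision oracle, and then combine these with binary search.

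By Lemma~\ref{lem:minMaxOptMatchingStructure}, in the one-sided case (and, WLOG, $B^+$ on the small-sector side of $l_R$, writing $k=|B^+|$), an optimal matching is determined by a single parameter $a\in\{1,\dots,|B^-|+1\}$: the middle block $r_a,\dots,r_{a+k-1}$ is matched in index order with $b^+_1,\dots,b^+_k$, while the remaining reds $r_1,\dots,r_{a-1}$ and $r_{a+k},\dots,r_n$ are matched in index order with $B^-$ (so $r_j\leftrightarrow b^-_j$ for $j<a$ and $r_{k+j}\leftrightarrow b^-_j$ for $j\geq a$). Writing $M(a)=\max(F(a),G(a))$ with
\[
F(a) = \max_{j=1}^{k} d(r_{a+j-1},b^+_j),\qquad G(a) = \max\!\Bigl(\max_{j<a} d(r_j,b^-_j),\ \max_{j\geq a} d(r_{k+j},b^-_j)\Bigr),
\]
the task reduces to minimizing $M(a)$ over the $O(n)$ choices of $a$.

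The central subroutine is a linear-time oracle that, given $\mu>0$, decides whether $M(a)\leq\mu$ for some $a$. The constraint $G(a)\leq\mu$ rewrites as $a\in(S(\mu),L(\mu)]$, where $L(\mu)$ is the smallest $j$ with $d(r_j,b^-_j)>\mu$ and $S(\mu)$ the largest $j$ with $d(r_{k+j},b^-_j)>\mu$; both are read off from the $O(n)$ pre\-computed $G$-distances in $O(n)$ time. The constraint $F(a)\leq\mu$ requires, for each $j\in[1,k]$, that the red index $a+j-1$ lie in the interval $[\ell_j(\mu),u_j(\mu)]$ of reds within distance $\mu$ of $b^+_j$. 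Plugging in the closed form $d(r_i,b^+_j)^2 = d_i^2 - 2d_ie_j\cos\alpha + e_j^2$ (with $d_i,e_j$ the distances of $r_i,b^+_j$ from $x$ and $\alpha<\pi/2$ the small-sector angle) and differentiating in $e_j$ reveals that $\ell_j(\mu)$ is monotone non-decreasing in $j$ while $u_j(\mu)$ is unimodal in $j$; both sequences can therefore be produced for all $j$ in $O(n)$ total time by two-pointer sweeps over the sorted red distances. Feasibility then reduces to checking non-emptiness of $[\max_j(\ell_j-j+1),\min_j(u_j-j+1)]\cap(S(\mu),L(\mu)]$, which fits into $O(n)$.

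Finally, the bottleneck value $\mu^*$ is either one of the $2|B^-|=O(n)$ precomputed $G$-distances, or else lies strictly between two consecutive such values $\mu_{-1}<\mu_0$ on which $S$ and $L$ are constant; in the latter case $\mu^*=\min_{a\in(S,L]}F(a)$. The first case is handled by sorting the $G$-distances in $O(n\log n)$ and binary-searching with the oracle in $O(\log n)$ calls. The second case needs a further binary search inside the window $(\mu_{-1},\mu_0)$ whose candidate $\mu$-values are $F$-edge lengths, and I expect this to be the main obstacle: we must avoid evaluating all $F(a)$ explicitly, since that would cost $\Theta(nk)$. I would address it by exploiting the Monge structure of the matrix $d(r_i,b^+_j)$ (which holds because the reds and blues lie on two rays from $x$) together with the monotonicity of $\ell_j$ proved above, or, as a robust fallback, by invoking Megiddo's parametric search applied to an $O(\log n)$-depth parallelization of the oracle. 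Either route yields the claimed $O(n\log n)$ bound.
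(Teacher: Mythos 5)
Your first step coincides with the paper's: by \cref{lem:minMaxOptMatchingStructure}, the one-sided problem reduces to minimizing over a one-parameter family of $O(n)$ candidate matchings (your $M(a)$ is exactly the paper's $M_k$), and that reduction is sound. The gap is precisely where you say you expect it: your strategy of binary searching over candidate \emph{values} $\mu$ with an $O(n)$ decision oracle only works if the optimal value can be located among $O(n\log n)$-enumerable candidates, but the optimum may be one of the $\Theta(n\,|B^+|)$ small-sector edge lengths $d(r_i,b^+_j)$. Neither proposed fix is carried out, and neither is obviously correct: $F(a)$ is a maximum taken along a \emph{diagonal} of the distance matrix, not a row minimum, so Monge/SMAWK machinery does not directly apply; and Megiddo's parametric search is not a proof as stated and would need a genuinely parallel oracle plus Cole's refinement to avoid an extra $\log$ factor. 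As written, the argument does not establish the $O(n\log n)$ bound.

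The paper sidesteps the oracle entirely by binary searching over the \emph{parameter} rather than the value. Evaluate a single candidate matching $M_k$ in $O(n)$ time and locate its longest edge, a small-sector edge $b^-_j r_{k+j-1}$; then the angle $\angle\, b^-_j\, r_{k+j-1}\, x$ determines the search direction: if it is obtuse, then $\max(M_l) \geq d(b^-_j, r_{l+j-1}) > d(b^-_j,r_{k+j-1}) = \max(M_k)$ for all $l<k$, so the optimum lies at $k$ or to its right; if it is acute, the symmetric statement holds for all $l>k$. This yields $O(\log n)$ evaluations of $O(n)$ time each, with no need to enumerate or search the space of edge lengths. Your write-up could be repaired by replacing your value-space binary search with this monotonicity observation; the oracle construction you develop, while clever, then becomes unnecessary.
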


\begin{proof}
    We use the notation of the proof of \cref{thm:osdcGeneralDP}.
    Suppose $|B^-| > 0$ and $|B^+| > 0$ (otherwise there is only one possible matching, easily constructed in $O(n)$ time). Without loss of generality, suppose that the sector incident to $B^+$ is big, i.e., $\angle r_1 x b^-_1 \leq \angle b^+_1 x r_1$.
    
    For $k \in \{1, \ldots, n-|B^-|+1\}$, let $M_k$ be the matching comprised of the small sector edges $b^-_i r_{k+i-1}$, $i \in \{1, \ldots, |B^-|\}$, and with all other points matched in the only possible way through the big sector, see \cref{fig:oneSidedDoublyCollinear}b. By \cref{lem:minMaxOptMatchingStructure} there is an $s$ such that $M_s$ is an optimal matching.
    
    Note that, for all $k \in \{1, \ldots, n-|B^-|\}$, the longest edge of $M_k$ in the big sector is $b_{|B^+|}^+ r_n$ and that the longest edge of $M_{n-|B^-|+1}$ in the big sector is $b_{|B^+|}^+ r_{n-|B^-|}$.
    For a matching $M$, we denote by $M'$ the subset of $M$ consisting of the edges in the small sector. We set $\max(M') = \max \{\, d(v,w) \colon (v,w)\in M' \,\}$.
    Let $k^\ast \in \{1, \ldots, n-|B^-|\}$ be the index such that $max(M_{k^\ast}')$ is minimal. Then $M_{k^\ast}$ and $M_{n-|B^-|+1}$ are the only candidates for the optimal \MinMaxB{} matching.
    
    Let $j$ be such that $b^-_j r_{k+j-1}$ is the longest edge in $M'_k$. If $\angle b^-_j r_{k+j-1} x$ is obtuse then $\max(M'_k) = d(b^-_j, r_{k+j-1}) < d(b^-_j, r_{l+j-1}) \leq \max(M'_l)$ for all $l < k$. If the angle is acute, the same holds for all $l > k$. This observation allows us to find the $k \in \{1, \ldots, n-|B^-|\}$ minimizing $\max(M'_k)$ using binary search in $O(|B^-| \log (n-|B^-|+1)) = O(n \log n)$ time.
\end{proof}

\paragraph{General case.}

We return to general doubly collinear matchings and consider the \MinMaxB{} problem.
By only considering matchings as described in \cref{lem:minMaxOptMatchingStructure}, enumerating all possible choices for the decision which blue point is matched through which sector,
and applying \cref{thm:osdcMinMax}
for the two resulting one-sided subproblems,
we obtain the following result.

\begin{theorem}\label{thm:doublymMB}
  If $P$ is doubly collinear, \MinMaxB{} can be solved in $O(n^4 \log n)$ time.
\end{theorem}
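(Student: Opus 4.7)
Based on \cref{lem:minMaxOptMatchingStructure}, I restrict attention to matchings in which, on each half-line, the points matched in the small adjacent sector form a consecutive block. The key observation is that once we fix, for every blue point, through which of its two adjacent sectors it is matched, the problem decouples: one subinstance contains the red points on one half-line of $l_R$ together with all blue points matched with them, and the other contains the red points on the other half-line of $l_R$ together with their blue partners. Each subinstance has its red points on a single half-line of $l_R$, so it is one-sided doubly collinear and can be solved in $O(n\log n)$ time by \cref{thm:osdcMinMax}. The value of the resulting candidate matching is the maximum of the two subinstance optima, and we return the minimum of these values over all enumerated assignments.

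To bound the number of enumerations, observe that on each blue half-line the assignment is determined by the consecutive small-sector block, specified by a starting position and a length. A priori this gives $O(n^2)$ choices per half-line, hence $O(n^4)$ in total. However, the balance requirement that the number of blue partners of each red half-line equal the number of red points on that half-line yields a single linear relation between the two block lengths, so fixing the length of the block on one blue half-line determines the length on the other. This reduces the enumeration to $O(n)$ valid length pairs together with $O(n)$ starting positions on each of the two blue half-lines, for a total of $O(n^3)$ assignments. Each assignment triggers two calls to \cref{thm:osdcMinMax} at a combined cost of $O(n\log n)$, giving an overall running time of $O(n^4 \log n)$.

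The main obstacle is verifying that this enumerate-and-decompose scheme is lossless. The optimal matching promised by \cref{lem:minMaxOptMatchingStructure} realizes one of the enumerated blue assignments by construction, so its value is among the candidates; meanwhile, the restriction to blocks that are consecutive on the blue half-lines still admits this matching. Conversely, once an assignment is fixed, every edge of any matching respecting it lies entirely inside one of the two subinstances, so the two subinstance objectives are independent; minimizing each separately with \cref{thm:osdcMinMax} and taking the maximum therefore yields the best matching compatible with the assignment. Taking the minimum over all $O(n^3)$ enumerated assignments thus recovers the overall \MinMaxB{} optimum within the claimed time bound.
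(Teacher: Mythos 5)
Your proposal is correct and follows essentially the same route as the paper's proof: restrict to the structured matchings of \cref{lem:minMaxOptMatchingStructure}, observe that one block-length parameter determines the others so that only $O(n^3)$ assignments (one length plus two starting positions on the blue half-lines) need be enumerated, and solve the two resulting one-sided subproblems with \cref{thm:osdcMinMax} in $O(n\log n)$ each. Your explicit justification that the two subinstances decouple and that the enumeration is lossless is a welcome addition that the paper leaves implicit.
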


\begin{proof}
  It suffices to consider matchings of the form described in \cref{lem:minMaxOptMatchingStructure}.
  Let $l_B'$ be one of the blue half-lines.
  We iterate through all $O(n)$ choices of how many points of $l_B'$ are matched through the incident small sector.
  This choice implies for the three other half-lines how many of their points are matched through their incident small sector.
  Then, for each of the blue half-lines, we iterate through all $O(n)$ choices of consecutive subsets of their points of the desired size for the set of points that are matched through the incident small sector.
  Finally, for each of these $O(n^3)$ cases, we apply the algorithm of \cref{thm:osdcMinMax} to the two one-sided subproblems with only one red half-line in $O(n \log n)$ time. Taking the minimum of the optimal solutions of all these cases yields the optimal solution of the original problem in total $O(n^4 \log n)$ time.
\end{proof}
    
\paragraph{Special angles of intersection.}

Finally, we consider special values for the angle of intersection $\alpha$ of $l_B$ and $l_R$.
By proving the existence of optimal matchings of a special form, we derive improved algorithms for these cases.

\begin{lemma}\label{lem:orthogonalOptMatchingStructure}
  If $\alpha=\frac{\pi}{2}$, for \MinMaxB{} and \MaxMinB{} there exist optimal
  matchings of the following form. Each of the four half-lines is cut into two parts and all points in such a part are matched to points of the same half-line, see \cref{fig:dcSpecialMatchingExamples}b.
\end{lemma}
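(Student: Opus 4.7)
The plan is to take any optimal matching $M$ and reshape it, via a sequence of objective-preserving local exchanges, into one of the claimed form. Place coordinates so that $l_R$ and $l_B$ are the axes meeting at $x$ placed at the origin; since $\alpha = \pi/2$, every matched pair $(r,b)$ satisfies the Pythagorean identity $d(r,b)^2 = d(x,r)^2 + d(x,b)^2$, and inside each of the four right-angled sectors the non-crossing matching pairs points by rank of distance from~$x$.

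I would first consider the following adjacent exchange: on a half-line $\ell$, take two consecutive points $p_1,p_2 \in P \cap \ell$ with $d(x,p_1) < d(x,p_2)$ whose matched partners $q_1,q_2$ lie in the two different incident sectors, and assume $d(x,q_1) < d(x,q_2)$. Swap the sector labels of $p_1$ and $p_2$; since they are consecutive, this merely exchanges their partners and leaves all other edges unchanged. Setting $y_i := d(x,p_i)$ and $c_i := d(x,q_i)$, a one-line application of the rearrangement inequality using the Pythagorean identity gives $\max\{y_1^2+c_2^2,\, y_2^2+c_1^2\} \le y_2^2+c_2^2$ and $\min\{y_1^2+c_2^2,\, y_2^2+c_1^2\} \ge y_1^2+c_1^2$, so the exchange preserves both \MinMaxB{} and \MaxMinB{}. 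It strictly decreases the potential $\Phi(M) := \sum_{(r,b)\in M} d(x,r)\cdot d(x,b)$ by $(y_2-y_1)(c_2-c_1) > 0$, so iteration terminates at an optimal matching $M^{\dagger}$ stable under all such swaps.

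I would then argue that $M^{\dagger}$ admits no $ABA$ pattern on three consecutive points of any half-line: chaining the two stability inequalities gives $d(x,q_i) \ge d(x,q_{i+2})$, contradicting the rank-monotone pairing inside sector $A$ which forces $d(x,q_i) < d(x,q_{i+2})$. The only remaining obstruction to the cut form is a ``sandwich'' $X\,Y\cdots Y\,X$ on some half-line. For \MinMaxB{} I would resolve it by one non-adjacent exchange swapping the $X$-point on the close-to-$x$ end with the innermost $Y$-point on the far-from-$x$ end of the sandwich (for \MaxMinB{} the symmetric swap, exchanging the far $X$-point with the innermost $Y$-point on the close side). The resulting cyclic permutation of partners along a chain of edges is analyzed edge-by-edge using the chain $c^{Y}_{1} \le c^{X}_{1} < c^{X}_{2} \le c^{Y}_{k}$ forced by stability: each new edge length is bounded above by the pre-swap maximum edge (and below by the pre-swap minimum edge for the \MaxMinB{} swap), so the relevant objective is preserved while the number of label-transitions on that half-line strictly decreases. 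Finitely many such sandwich moves bring every half-line to at most one transition, i.e., cut form.

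The most technical step will be this sandwich case, where three or more edges change simultaneously and one must chase the stability chain through each of them to verify that the overall maximum (respectively minimum) of the matching does not move in the wrong direction; the calculation relies crucially on the coordinate-separable decomposition $d^2 = y^2 + c^2$ of the squared distance, which holds only at right angles. For $\alpha \neq \pi/2$ the law of cosines introduces a cross term $-2yc\cos\alpha$ that breaks the clean rearrangement, which explains why the lemma is stated only for $\alpha = \pi/2$.
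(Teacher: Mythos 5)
Your first phase is sound: the adjacent rearrangement swap preserves both objectives, the potential $\Phi$ certifies termination, and the no-$ABA$-triple consequence is correct. The gap is in the second phase. Write the sandwich as consecutive points $p_0(X),p_1,\dotsc,p_k(Y),p_{k+1}(X)$ with $y_i:=d(x,p_i)$ and $c_i:=d(x,q_i)$ for the partners $q_i$. First, relabelling $p_0$ and $p_k$ does \emph{not} in general decrease the number of label-transitions on the half-line: if the point immediately preceding $p_0$ is also labelled $X$, the move destroys the two transitions bounding the $Y$-block but creates a new one at that preceding point, for a net change of zero (e.g.\ $XXYYX \mapsto XYYXX$), so this phase has no termination argument. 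Second, the edge-by-edge bounds are only asserted: the move reshuffles the rank-order pairing of the whole block in the $Y$-sector (each $p_{i-1}$ inherits $q_i$), and while the bounds do go through for \MinMaxB{} via $y_{i-1}^2+c_i^2<y_i^2+c_i^2$ together with $c_0<c_{k+1}$ for the one displaced $X$-sector edge, the same move fails for \MaxMinB{}: the new edge $(p_0,q_1)$ has squared length $y_0^2+c_1^2\le y_0^2+c_0^2$ by your own stability inequality, so it can undercut every edge it replaces, and the ``symmetric'' move you invoke for \MaxMinB{} is never analyzed. As written, phase two is a plan, not a proof.

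The paper avoids sandwiches entirely with a differently triggered adjacent swap: fix the point $r$ of the half-line \emph{farthest} from $x$ and its partner $b$, and swap the partners of consecutive $r_1,r_2$ (both distinct from $r$, with $r_1$ closer to $x$) whenever $r_1$ is matched into $b$'s sector and $r_2$ into the other. The new edge $(r_2,b_1)$ is then bounded not against the two swapped edges but against the third, untouched edge $(r,b)$, since the rank-order pairing inside $b$'s sector gives $d(x,b_1)<d(x,b)$ and $d(x,r_2)<d(x,r)$; the other new edge satisfies $d(r_1,b_2)<d(r_2,b_2)$. This rule needs no hypothesis on the order of $c_1,c_2$, pushes the farthest point's label monotonically inward (so termination is by counting), and lands directly in cut form; anchoring at the closest point instead handles \MaxMinB{}. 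If you want to salvage your two-phase plan, you essentially need this anchoring idea anyway to certify progress, at which point the first phase becomes unnecessary.
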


\begin{proof}
  Let $M$ be an optimal \MinMaxB{} matching.
  Let $l_R'$ be one of the blue half-lines. Let $r$ be the point in $\capP{l_R'}$ farthest from $x$.
  Further, let $r_1,r_2 \neq r$ be two consecutive points in $\capP{l_R'}$ with $r_1$ closer to $x$ than $r_2$.
  Let $b,b_1,b_2$ be the points that $r,r_1,r_2$ are matched with in $M$, respectively.
  Assume that $b_1,b$ are on the same blue half-line and $b_2$ is on the other blue half-line.
  \Cref{fig:doublyCollinearOrthogonalProof}a shows this situation.
  Then swapping the edges $\edge{b_1}{r_1},\edge{b_2}{r_2}$ in $M$ with the edges $\edge{b_1}{r_2},\edge{b_2}{r_1}$ does not increase the length of a
  longest edge of $M$ since $d(b_1,r_2) < d(b,r)$ and $d(b_2,r_1) < d(b_2,r_2)$.
  Therefore, the matching is still an optimal \MinMaxB{} matching after the swap.
  By repeatedly applying these swaps we can transform $M$ into an optimal \MinMaxB{} matching
  that fulfills the desired property on the half-line $l_R'$. If in $M$ the property was already
  fulfilled on another half-line, it is still fulfilled there in the new matching. Hence,
  repeating this swapping process on the other half-lines yields an optimal \MinMaxB{} matching
  of the desired form. The proof for \MaxMinB{} is analogous.
\end{proof}

\begin{theorem}\label{thm:orthogonalLinearTime}
  If $\alpha=\frac{\pi}{2}$, \MinMaxB{} and \MaxMinB{} can be solved in $O(n)$ time.
\end{theorem}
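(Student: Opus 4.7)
The plan is to exploit Lemma~\ref{lem:orthogonalOptMatchingStructure} to restrict the search to a parameterised family of size $O(n)$, then sweep that family in linear time.

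By Lemma~\ref{lem:orthogonalOptMatchingStructure} it suffices to consider \emph{split-form} matchings: each of the four half-lines is divided at one interior position into two contiguous parts, and each part is matched entirely with one of the two adjacent half-lines, so every edge lies in one of the four right-angled quadrants bounded by $l_B$ and $l_R$. Inside a quadrant containing $p$ blue and $p$ red points, the unique non-crossing bichromatic matching is the order-preserving one: the outermost blue must be matched with the outermost red so that the two sides of the resulting chord have balanced colour counts, and iterating yields the pairing $b_i\leftrightarrow r_i$ when both lists are listed by increasing distance from $x$. Because the length of the $i$-th edge equals $\sqrt{d(b_i,x)^2+d(r_i,x)^2}$ and is strictly monotone in $i$, the longest edge of the quadrant is realised by the outermost pair and the shortest by the innermost pair.

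A split-form matching is fully described by (a)~a binary choice for each of the four half-lines specifying which of its two parts (the one containing the outermost point or the one containing the innermost point) is assigned to each of the two adjacent quadrants, and (b)~a single integer $k=n_{--}$ counting edges in one reference quadrant. The remaining quadrant sizes $n_{-+},n_{+-},n_{++}$ are forced by the half-line cardinalities and the balance identity $|B^-|+|B^+|=|R^-|+|R^+|=n$. Thus (a) contributes at most $2^4$ discrete configurations and (b) contributes $O(n)$ admissible values of $k$.

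For each of the $O(1)$ configurations in (a) I would sweep $k$ from its smallest to its largest admissible value. Increasing $k$ by one inserts exactly one matched edge into each of two quadrants and deletes one from each of the other two; using the sorted orders along the four half-lines, the four edges involved, together with the updated innermost- and outermost-matched pairs in each affected quadrant, are identified by advancing a constant number of pointers in $O(1)$. The global longest edge (for \MinMaxB{}) or shortest edge (for \MaxMinB{}) is then simply the extremum of four per-quadrant values, recomputed in $O(1)$ per step. A full sweep costs $O(n)$, and taking the best outcome over the constantly many configurations preserves the $O(n)$ bound.

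The most delicate point of the plan is step (a): one must verify that every combination of split directions yields a valid non-crossing matching and that Lemma~\ref{lem:orthogonalOptMatchingStructure} really reduces the optimum to this finite enumeration. The first part follows because edges living in different right-angled quadrants cannot cross, so the choice on each half-line may be made independently; the second part is obtained by unwinding the swap argument in the proof of Lemma~\ref{lem:orthogonalOptMatchingStructure} on each of the four half-lines in turn.
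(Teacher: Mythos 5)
Your proposal is correct and follows essentially the same route as the paper: restrict attention to the structured matchings of Lemma~\ref{lem:orthogonalOptMatchingStructure}, enumerate the $2^4$ part-to-sector assignments times the $O(n)$ split positions, and evaluate each candidate in $O(1)$ because within a right-angled sector the unique non-crossing matching is order-preserving and its edge lengths $\sqrt{d(b_i,x)^2+d(r_i,x)^2}$ are monotone, so only the innermost and outermost pairs matter. One small imprecision: in a configuration where a sector receives the inner part of one half-line and the outer part of the other, incrementing $k$ re-pairs that whole sector rather than inserting a single edge, but this does not affect the bound since the extremal pairs are still located in $O(1)$ from the split indices.
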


\begin{proof}
  We only consider matchings of the form as described in \cref{lem:orthogonalOptMatchingStructure}.
  We will show that there are only $O(n)$ matchings of this form and that the value of each of these matchings
  can be determined in $O(1)$ time.
    
  Each half-line is split into two parts. We decide for each half-line independently the points of which of these
  two parts are matched with the points of which of the two half-lines of the other color. There are $2^4 = O(1)$ possibilities.
  If we now decide for one of the four half-lines between which two points to split it, the matching is fixed (the choice
  might turn out to be infeasible). There are at most $n+1 = O(n)$ possibilities for this split.
  Since, for each sector, the shortest edge is the edge closest to $x$ and the longest edge
  is the edge farthest from $x$, the value of this matching can be found by computing and comparing the lengths of at most $4 = O(1)$ edges.
\end{proof}

\begin{figure}[t]
  \centering
  \begin{minipage}{0.32\textwidth}
    \centering
    \includegraphics[width=0.9\textwidth,page=1]{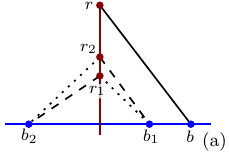}
  \end{minipage}
  \hfill
  \begin{minipage}{0.32\textwidth}
    \centering
    \includegraphics[width=0.9\textwidth,page=2]{doubly_collinear_special_proof}
  \end{minipage}
  \hfill    
  \begin{minipage}{0.32\textwidth}
    \centering
    \includegraphics[width=0.9\textwidth,page=3]{doubly_collinear_special_proof}
  \end{minipage}
  \caption{(a) The situation of the proof of \cref{lem:orthogonalOptMatchingStructure}.
    (b) The situation of the proof of \cref{lem:smallAngleOptMinMaxMatchingStructure} with $\beta \leq \frac{\pi}{2}$.
    (c) The situation of the proof of \cref{lem:smallAngleOptMinMaxMatchingStructure} with $\beta \geq \frac{\pi}{2}$.}
  \label{fig:doublyCollinearOrthogonalProof}
  \label{fig:doublyCollinearSmallAngleSmallBetaProof}
  \label{fig:doublyCollinearSmallAngleLargeBetaProof}
\end{figure}

\begin{lemma}\label{lem:smallAngleOptMinMaxMatchingStructure}
  If $\alpha \leq \frac{\pi}{4}$, there exists an optimal \MinMaxB{}
  matching of the following form. Each of the four half-lines is split into an inner and an outer part, where the inner part is closer to $x$, and points
  from an inner (outer) part are matched through a big (small) sector, see \cref{fig:dcSpecialMatchingExamples}c.
\end{lemma}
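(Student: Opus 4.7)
The plan is to use a local swap argument, parallel to those in the proofs of \cref{lem:minMaxOptMatchingStructure,lem:orthogonalOptMatchingStructure}. Start from an arbitrary optimal \MinMaxB{} matching $M$ and, among all optimal matchings, pick one that maximizes the potential $\Phi(M)$ equal to the sum, over all four half-lines, of the indices (counted from $x$ outward) of the points matched through the incident small sector. If this matching does not yet have the claimed form, then on some half-line $l'$ there exist two consecutive points $u,v\in\capP{l'}$, with $u$ closer to $x$ than $v$, such that $u$ is matched through the small sector incident to $l'$ to some $p$ while $v$ is matched through the big sector incident to $l'$ to some $q$. The swap I would apply replaces the edges $\edge{u}{p},\edge{v}{q}$ by $\edge{u}{q},\edge{v}{p}$.

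The first step is to check that the swapped matching is still non-crossing. Since $p$ and $q$ lie on different half-lines of $l_B$, the two new edges lie in different sectors and hence cannot cross each other; within each sector, the pairing between the two incident red/blue half-lines remains order-preserving, so no crossing is created with any untouched edge of $M$. It then remains to prove $\max(d(u,q),d(v,p)) \le \max(d(u,p),d(v,q))$. Since $u$ is closer to $x$ than $v$ and both $\edge{u}{q}$ and $\edge{v}{q}$ lie entirely in the big sector, $d(u,q) < d(v,q)$ is immediate.

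The main obstacle is bounding $d(v,p)$. I would place $x$ at the origin with $l'$ along the positive $x$-axis and denote by $\rho_u<\rho_v,\sigma,\tau$ the distances of $u,v,p,q$ to $x$, so that expanding the squared distances yields $d(v,p)^2=\rho_v^2+\sigma^2-2\rho_v\sigma\cos\alpha$ and analogous formulas for the other three edges. Let $\beta:=\angle xvp$, as indicated in \cref{fig:doublyCollinearSmallAngleSmallBetaProof}, so that $\cos\beta$ has the same sign as $\rho_v-\sigma\cos\alpha$. If $\beta\ge\frac{\pi}{2}$, then $\rho_v\le\sigma\cos\alpha$ and hence $\rho_u+\rho_v<2\sigma\cos\alpha$, which rearranges to $d(v,p)<d(u,p)$. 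If $\beta\le\frac{\pi}{2}$, then $\rho_v\ge\sigma\cos\alpha$, so $2\rho_v\cos\alpha\ge 2\sigma\cos^2\alpha\ge\sigma$, where the last inequality uses exactly $\cos^2\alpha\ge\frac{1}{2}$, i.e., $\alpha\le\frac{\pi}{4}$; this gives $\sigma\le\tau+2\rho_v\cos\alpha$, which rearranges to $d(v,p)\le d(v,q)$.

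In either case, $\max(d(u,q),d(v,p))\le\max(d(u,p),d(v,q))$, so the swap produces a non-crossing matching whose longest edge is no longer than that of $M$. Hence the resulting matching $M'$ is still optimal, yet $\Phi(M')=\Phi(M)+1$, contradicting the maximality of $\Phi(M)$. Therefore $M$ must already have the claimed inner/outer structure.
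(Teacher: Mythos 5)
Your proof is correct and follows essentially the same route as the paper's: the identical swap of two consecutive points (closer one matched through the small sector, farther one through the big sector), with the same case split on the angle $\beta$ at the farther point and the same use of $\alpha \leq \frac{\pi}{4}$ in the acute case. The only cosmetic differences are that you make termination explicit via the potential $\Phi$ (the paper just says ``by repeatedly applying these swaps'') and you verify the acute-$\beta$ bound algebraically where the paper uses the geometric chain $d(r_2,b_1)\leq d(r_2,x)\leq d(r_2,b_2)$.
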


\begin{proof}
  Let $M$ be an optimal \MinMaxB{} matching.
  Let $l_R'$ be one of the blue half-lines. Let $r_1,r_2$ be two consecutive points in $\capP{l_R'}$ with $r_1$ closer to $x$ than $r_2$.
  Let $b_1,b_2$ be the points that $r_1,r_2$ are matched with in $M$, respectively.
  Assume that the edge $\edge{b_1}{r_1}$ lies in a small sector and that the edge $\edge{b_2}{r_2}$ lies in a big sector.
  Let $\beta < \pi$ be the angle between the line segments $[r_2,b_1]$ and $[r_2,x]$.
  If $\beta$ is acute (\cref{fig:doublyCollinearSmallAngleSmallBetaProof}b shows this situation), we have $d(r_2,b_1) \leq d(r_2,x) \leq d(r_2,b_2)$. For the first inequality, we use that $\alpha \leq \frac{\pi}{4}$.
  If $\beta$ is obtuse (\cref{fig:doublyCollinearSmallAngleLargeBetaProof}c shows this situation), we have $d(r_2,b_1) \leq d(r_1,b_1)$.
  Further, we have $d(r_1,b_2) < d(r_2,b_2)$.
  Therefore, swapping the edges $\edge{b_1}{r_1},\edge{b_2}{r_2}$ in $M$ with the edges $\edge{b_1}{r_2},\edge{b_2}{r_1}$ does not increase the length of a
  longest edge of $M$.
  Since $r_1,r_2$ are consecutive points, the swap also does not produce crossing edges.
  Hence, the matching is still an optimal \MinMaxB{} matching after the swap.
  By repeatedly applying these swaps we can transform $M$ into an optimal \MinMaxB{} matching
  that fulfills the desired property on the half-line $l_R'$. If in $M$ the property was already
  fulfilled on another half-line, it is still fulfilled there in the new matching. Hence,
  repeating this swapping process on the other half-lines yields an optimal \MinMaxB{} matching
  of the desired form.
\end{proof}

\begin{theorem}
  If $\alpha \leq \frac{\pi}{4}$, \MinMaxB{} can be solved in $O(n)$ time.
\end{theorem}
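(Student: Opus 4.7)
The plan is to reduce the \MinMaxB{} problem to a one-parameter search over $O(n)$ candidate matchings, each evaluable in $O(1)$ time, by exploiting the rigid structure guaranteed by \cref{lem:smallAngleOptMinMaxMatchingStructure}.

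Concretely, let $H_1,H_2,H_3,H_4$ denote the four half-lines in cyclic order around $x$, with $H_1,H_3$ of one color and $H_2,H_4$ of the other; let $n_i$ be the number of points on $H_i$ and let $k_i$ denote the number of inner points on $H_i$, i.e.\ the $k_i$ points of $\capP{H_i}$ closest to $x$. The lemma forces the $k_i$ inner points of $H_i$ to be matched through the unique big sector adjacent to $H_i$, so the four sector-balance conditions together with the color-balance $|R|=|B|$ pin down $k_2,k_3,k_4$ as affine functions of $k_1$, leaving $k_1$ as the only free parameter and restricting it to an integer interval of length $O(n)$.

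Within any sector, the points on its two bounding half-lines are in convex position, so the unique non-crossing matching between them is the one pairing the $i$-th nearest to $x$ on each side. By the law of cosines, the length $\sqrt{d_1^2+d_2^2-2d_1d_2\cos\theta}$ is strictly increasing in each of $d_1,d_2$ for any $\theta\in(0,\pi)$, hence the longest edge inside a sector joins the two outermost matched points on its bounding half-lines. After an $O(n)$ preprocessing step storing the positions and cumulative distances along each half-line, this longest edge and its length are accessible in $O(1)$ time for each sector and each choice of $k_1$.

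The algorithm then iterates over the $O(n)$ feasible values of $k_1$, computes in constant time the maximum of the (at most four) sector-longest edges, and returns the minimum. Since the four sector-internal matchings occupy pairwise interior-disjoint regions, their union is automatically non-crossing. I expect no genuine obstacle: the only point requiring care is the degenerate case where some $k_i$ equals $0$ or $n_i$, which empties the corresponding sector and simply drops its term from the maximum, and one should also verify that the lengths computed via the law of cosines identify the correct extremal edge when several sectors are simultaneously active. Neither issue affects correctness nor the linear running time.
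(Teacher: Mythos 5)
There is a genuine gap in the constant-time evaluation of each candidate matching. Your overall plan --- restricting to the structured matchings of \cref{lem:smallAngleOptMinMaxMatchingStructure}, observing that a single split parameter determines the other three, and iterating over $O(n)$ candidates --- matches the paper. But your claim that ``the longest edge inside a sector joins the two outermost matched points on its bounding half-lines,'' justified by asserting that $\sqrt{d_1^2+d_2^2-2d_1d_2\cos\theta}$ is strictly increasing in each of $d_1,d_2$ for any $\theta\in(0,\pi)$, is false precisely in the case that matters here. The partial derivative of $d_1^2+d_2^2-2d_1d_2\cos\theta$ with respect to $d_1$ is $2(d_1-d_2\cos\theta)$, which is guaranteed nonnegative only when $\cos\theta\leq 0$, i.e., in the \emph{big} sectors (this is just the decreasing chords property). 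In a \emph{small} sector the angle is at most $\frac{\pi}{4}$, and the outermost edge need not be the longest: take $b_1,r_1$ both at distance roughly $10$ from $x$ on the two half-lines of a sector of tiny angle, and $b_2$ at distance $1$, $r_2$ at distance $9$; then $d(b_1,r_1)$ is close to $0$ while $d(b_2,r_2)$ is close to $8$. So your $O(1)$ lookup via ``positions and cumulative distances'' does not return the correct maximum for the small sectors, and the running-time claim collapses to $O(n)$ per candidate, i.e., $O(n^2)$ overall, unless this is repaired.

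The repair is exactly what the paper does: in each small sector, order the blue and red points $b_1,b_2,\dotsc$ and $r_1,r_2,\dotsc$ by decreasing distance to $x$, note that a candidate matching with $k$ edges in that sector uses precisely the pairs $(b_i,r_i)$ for $i=1,\dotsc,k$, and precompute the prefix maxima $m_k=\max_{i=1,\dotsc,k}d(b_i,r_i)$ via $m_{k+1}=\max(m_k,d(b_{k+1},r_{k+1}))$ in $O(n)$ total time. With these tables, plus the outermost edge of each big sector (correct there by the decreasing chords property), each candidate is evaluated in $O(1)$ time and the $O(n)$ bound follows. The rest of your argument (the affine dependence of $k_2,k_3,k_4$ on $k_1$, disjointness of the sectors, and the degenerate empty-sector cases) is sound and consistent with the paper.
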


\begin{proof}
  We only consider matchings of the form as described in \cref{lem:smallAngleOptMinMaxMatchingStructure}. We will show that there are only $O(n)$ matchings of this form and that the value of each of these matchings can be computed in $O(1)$ time after doing some precomputing in $O(n)$ time.
  
  The matchings described in \cref{lem:smallAngleOptMinMaxMatchingStructure} are in particular of the form of the matchings
  described in \cref{lem:orthogonalOptMatchingStructure}. Therefore, it follows as in the proof of \cref{thm:orthogonalLinearTime} that there are only $O(n)$ matchings of this form. Let $b_1,b_2,\dotsc$ and $r_1,r_2,\dotsc$ be the blue and red points, respectively, on the boundary of one of the two small sectors in the order of decreasing distance to $x$. If a matching has $k$ edges in this sector, then the longest edge in this sector has the value $m_k := \max_{i=1,\dotsc,k} d(b_i,r_i)$. Since $m_{k+1} = \max ( m_k, d(b_{k+1},r_{k+1}) )$ for all $k\geq 1$, the values $m_k$, $k=1,2,\dotsc$, can be computed in total $O(n)$ time. The same can be done for the other small sector. Since the big sectors have the decreasing chords property, the longest edge in a big sector is always the edge farthest from $x$. Therefore, after precomputing the values $m_k$ for both small sectors, the value of a matching can be found in $O(1)$ time.
\end{proof}

\paragraph{Remarks for this section.}
The presented algorithms for \MaxMinB{} and \MinMaxB{} rely on the existence of an optimal matching with a special structure: the points on each half-line of at least one color are partitioned into $O(1)$ subsets of consecutive points and all points of the same subset are matched through the same sector. Without any special structure it is difficult to make any assumptions, as for example for the \MaxMinB{}, for which we are currently not aware of any such structure.

\section{Concluding remarks}

We considered new optimization variants for perfect non-crossing matchings of points in the plane.
In most \MinMin{} and \MaxMax{} variants, we came up with optimal algorithms by exploiting structural properties of the point sets, combined with existing techniques from diverse problems.
On the contrary, we saw that the \MaxMin{} variant exhibits a significant difficulty.
Designing efficient algorithms even for simple configurations, as cocircular or doubly collinear points, is not at all obvious and thus, each variation is quite interesting on its own.
Throughout~the paper we posed several open questions together with suggestions for approaches.
For instance, regarding convex bichromatic point sets, can orbits help to improve the \MaxMin{} algorithms?
Regarding arbitrary point sets, is there a polynomial time feasibility check for a bichromatic edge? 
Are the \MaxMin{} variants \np{} as their \MinMax{} counterparts?
It would be interesting to see how \cref{tab:allResults} can be filled with optimal time algorithms or hardness results.

\paragraph{Acknowledgements.}

Preliminary discussions were held during the \href{https://dccg.upc.edu/irp2018/}{Intensive Research Program in Discrete, Combinatorial and Computational Geometry} which took place in Barcelona in 2018.
We are grateful to the \href{http://www.crm.cat}{Centre de Recerca Matem\`{a}tica}, Universitat Autónoma de Barcelona,for hosting the event and to the organizers for providing the platform to meet and collaborate. 
We would also like to thank Carlos Alegr\'{i}a, Carlos Hidalgo Toscano, Oscar Iglesias Valiño, and Leonardo Mart\'{i}nez Sandoval for initial discussions on the problems, and Carlos Seara for raising a question that motivated this work.
Finally, we would like to thank  an anonymous reviewer for bringing to our attention the halfplane range queries.

\newpage

\bibliography{references}

\begin{thebibliography}{10}

\bibitem{abu2014}
A.~K. Abu-Affash, P.~Carmi, M.~J. Katz, and Y.~Trabelsi.
\newblock Bottleneck non-crossing matching in the plane.
\newblock {\em Computational Geometry}, 47(3A):447--457, 2014.

\bibitem{agarwal2017}
P.~K. Agarwal.
\newblock Simplex range searching and its variants: A review.
\newblock In {\em A Journey Through Discrete Mathematics}, pages 1--30.
  Springer, 2017.

\bibitem{agarwal2000}
P.~K. Agarwal, A.~Efrat, and M.~Sharir.
\newblock Vertical decomposition of shallow levels in 3-dimensional
  arrangements and its applications.
\newblock {\em SIAM Journal on Computing}, 29(3):912--953, 2000.

\bibitem{handbook-arrangements}
P.~K. Agarwal and M.~Sharir.
\newblock Arrangements and their applications.
\newblock In {\em Handbook of Computational Geometry}, chapter~2, pages
  49--119. North-Holland, 2000.

\bibitem{alon1993}
N.~Alon, S.~Rajagopalan, and S.~Suri.
\newblock Long non-crossing configurations in the plane.
\newblock In {\em Proc. 9th Annual Symposium on Computational Geometry}, pages
  257--263, 1993.

\bibitem{handbook-visibility}
T.~Asano, S.~K. Ghosh, and T.~C. Shermer.
\newblock Visibility in the plane.
\newblock In {\em Handbook of Computational Geometry}, chapter~19, pages
  829--876. North-Holland, 2000.

\bibitem{benor1983}
M.~Ben-Or.
\newblock Lower bounds for algebraic computation trees.
\newblock In {\em Proc.\ 15th Annual ACM Symposium on Theory of Computing},
  pages 80--86, 1983.

\bibitem{biniaz2014}
A.~Biniaz, A.~Maheshwari, and M.~H. Smid.
\newblock Bottleneck bichromatic plane matching of points.
\newblock In {\em Proc.\ 26th Canadian Conference on Computational Geometry},
  pages 431--435, 2014.

\bibitem{carlsson2015}
J.~G. Carlsson, B.~Armbruster, S.~Rahul, and H.~Bellam.
\newblock A bottleneck matching problem with edge-crossing constraints.
\newblock {\em International Journal of Computational Geometry \&
  Applications}, 25(4):245--261, 2015.

\bibitem{chan1996}
T.~M. Chan.
\newblock Optimal output-sensitive convex hull algorithms in two and three
  dimensions.
\newblock {\em Discrete \& Computational Geometry}, 16(4):361--368, 1996.

\bibitem{colannino2006}
J.~Colannino, M.~Damian, F.~Hurtado, J.~Iacono, H.~Meijer, S.~Ramaswami, and
  G.~Toussaint.
\newblock An {O}(n log n)-time algorithm for the restriction scaffold
  assignment problem.
\newblock {\em Journal of Computational Biology}, 13(4):979--989, 2006.

\bibitem{cong1993}
J.~Cong, A.~B. Kahng, and G.~Robins.
\newblock Matching-based methods for high-performance clock routing.
\newblock {\em IEEE Transactions on Computer-Aided Design of Integrated
  Circuits and Systems}, 12(8):1157--1169, 1993.

\bibitem{edelsbrunner1985}
H.~Edelsbrunner.
\newblock Computing the extreme distances between two convex polygons.
\newblock {\em Journal of Algorithms}, 6(2):213--224, 1985.

\bibitem{efrat2001}
A.~Efrat, A.~Itai, and M.~J. Katz.
\newblock Geometry helps in bottleneck matching and related problems.
\newblock {\em Algorithmica}, 31(1):1--28, 2001.

\bibitem{efrat1996}
A.~Efrat and M.~J. Katz.
\newblock Computing fair and bottleneck matchings in geometric graphs.
\newblock In {\em Proc. 7th International Symposium on Algorithms and
  Computation}, pages 115--125. Springer, 1996.

\bibitem{eppstein2015}
D.~Eppstein, M.~van Kreveld, B.~Speckmann, and F.~Staals.
\newblock Improved grid map layout by point set matching.
\newblock {\em International Journal of Computational Geometry \&
  Applications}, 25(02):101--122, 2015.

\bibitem{fischer2011}
J.~Fischer and V.~Heun.
\newblock Space-efficient preprocessing schemes for range minimum queries on
  static arrays.
\newblock {\em SIAM Journal on Computing}, 40(2):465--492, 2011.

\bibitem{hershberger1992}
J.~Hershberger and S.~Suri.
\newblock Applications of a semi-dynamic convex hull algorithm.
\newblock {\em BIT Numerical Mathematics}, 32(2):249--267, 1992.

\bibitem{kaplan2020}
H.~Kaplan, W.~Mulzer, L.~Roditty, P.~Seiferth, and M.~Sharir.
\newblock Dynamic planar voronoi diagrams for general distance functions and
  their algorithmic applications.
\newblock {\em Discrete \& Computational Geometry}, 64(3):838--904, 2020.

\bibitem{lo1994}
C.-Y. Lo, J.~Matou{\v{s}}ek, and W.~Steiger.
\newblock Algorithms for ham-sandwich cuts.
\newblock {\em Discrete \& Computational Geometry}, 11(4):433--452, 1994.

\bibitem{marcotte1991}
O.~Marcotte and S.~Suri.
\newblock Fast matching algorithms for points on a polygon.
\newblock {\em SIAM Journal on Computing}, 20(3):405--422, 1991.

\bibitem{matousek1993}
J.~Matou{\v{s}}ek.
\newblock Range searching with efficient hierarchical cuttings.
\newblock {\em Discrete \& Computational Geometry}, 10(2):157–--182, 1993.

\bibitem{savic2017}
M.~Savi{\'c} and M.~Stojakovi{\'c}.
\newblock Faster bottleneck non-crossing matchings of points in convex
  position.
\newblock {\em Computational Geometry}, 65:27--34, 2017.

\bibitem{savic2018}
M.~Savi{\'c} and M.~Stojakovi{\'c}.
\newblock Bottleneck bichromatic non-crossing matchings using orbits, 2018.

\bibitem{tangwongsan2017}
K.~Tangwongsan, M.~Hirzel, and S.~Schneider.
\newblock Low-latency sliding-window aggregation in worst-case constant time.
\newblock In {\em Proc.\ 11th ACM International Conference on Distributed and
  Event-based Systems}, pages 66--77, 2017.

\bibitem{toussaint1984}
G.~T. Toussaint.
\newblock An optimal algorithm for computing the minimum vertex distance
  between two crossing convex polygons.
\newblock {\em Computing}, 32(4):357--364, 1984.

\bibitem{vaidya1989}
P.~M. Vaidya.
\newblock Geometry helps in matching.
\newblock {\em SIAM Journal on Computing}, 18(6):1201--1225, 1989.

\bibitem{varadarajan1998}
K.~R. Varadarajan.
\newblock A divide-and-conquer algorithm for min-cost perfect matching in the
  plane.
\newblock In {\em Proc.\ 39th Symposium on Foundations of Computer Science},
  pages 320--329, 1998.

\bibitem{veltkamp2001}
R.~C. Veltkamp and M.~Hagedoorn.
\newblock State of the art in shape matching.
\newblock In {\em Principles of visual information retrieval}, pages 87--119.
  Springer, 2001.

\end{thebibliography}
\bibliographystyle{abbrv}

\end{document}